\theoremstyle{plain}
\newtheorem{thm}{\protect\theoremname}
\theoremstyle{definition}
\newtheorem{example}{\protect\examplename}
\theoremstyle{plain}
\newtheorem{prop}{\protect\propositionname}
\theoremstyle{plain}
\newtheorem{cor}{Corollary}
\theoremstyle{plain}
\newtheorem{lem}{\protect\lemmaname}
\def\LyX{\texorpdfstring{%
  L\kern-.1667em\lower.25em\hbox{Y}\kern-.125emX\@}
  {LyX}}
\newcommand{\mylabel}[2]{#2\def\@currentlabel{#2}\label{#1}}
\newcommand{\vertiii}[1]{{\left\vert\kern-0.25ex\left\vert\kern-0.25ex\left\vert #1
    \right\vert\kern-0.25ex\right\vert\kern-0.25ex\right\vert}}
\newcommand\Author{Lin, M\"uller, Park}
\title{Manifold Additive Models}
\let\Title\@title
\newcommand{\footremember}[2]{%
    \footnote{#2}
    \newcounter{#1}
    \setcounter{#1}{\value{footnote}}%
}
\newcounter{condA}
\newcounter{condB}
\providecommand{\examplename}{Example}
\providecommand{\lemmaname}{Lemma}
\providecommand{\propositionname}{Proposition}
\providecommand{\theoremname}{Theorem}
\begin{document}
\global\long\def\expect{\mathbb{E}}%
\global\long\def\prob{\mathrm{Pr}}%
\global\long\def\bW{\mathbf{W}}%
\global\long\def\bX{\mathbf{X}}%
\global\long\def\bu{\mathbf{u}}%
\global\long\def\bx{\mathbf{x}}%
\global\long\def\proj{\mathcal{P}}%
\global\long\def\real{\mathbb{R}}%
\global\long\def\Op{O_{P}}%
\global\long\def\manifold{\mathcal{M}}%
\global\long\def\txm{T_{x}\mathcal{M}}%
\global\long\def\op{o_{P}}%
\global\long\def\oas{o_{a.s.}}%
\global\long\def\hilbert{\mathcal{\mathcal{L}}^{2}}%
\global\long\def\covarop{\mathcal{C}}%
\global\long\def\asymplt{\lesssim}%
\global\long\def\rkhs{\mathcal{H}(\mathbf{K})}%
\global\long\def\rk{\mathbf{K}}%
\global\long\def\asympgt{\gtrsim}%
\global\long\def\asympeq{\asymp}%
\global\long\def\xsample{\mathcal{X}}%
\global\long\def\ltwospace{\mathcal{L}^{2}(D)}%
\global\long\def\linearopset{\mathfrak{L}}%
\global\long\def\diffop{\mathrm{d}}%
\newcommandx\tangentspace[2][usedefault, addprefix=\global, 1=\manifold]{T_{#2}#1}%
\global\long\def\asympstgt{\gg}%
\global\long\def\vecfield{\mathcal{V}}%
\global\long\def\dim{d}%
\global\long\def\coordop{\mathbf{V}}%
\global\long\def\ball{B}%
\global\long\def\vec#1{\mathbf{#1}}%
\newcommandx\lpnorm[3][usedefault, addprefix=\global, 1=r, 2=]{\|#3\|_{\mathcal{L}^{#1}}^{#2}}%
\newcommandx\lp[1][usedefault, addprefix=\global, 1=p]{\mathcal{L}^{#1}}%
\global\long\def\transpose{\mathrm{\top}}%
\global\long\def\transport{\tau}%
\global\long\def\lielog{\mathfrak{log}}%
\global\long\def\liealgebra{\mathfrak{g}}%
\global\long\def\lieexp{\mathfrak{exp}}%

\global\long\def\asympstlt{\ll}%
\global\long\def\ysample{\mathcal{Y}}%
\global\long\def\ltwo{\mathcal{L}^{2}}%
\global\long\def\observed{\mathscr{D}}%
\global\long\def\innerprod#1#2{\langle#1,#2\rangle}%
\global\long\def\metric#1#2{\langle#1,#2\rangle}%
\global\long\def\Log{\mathrm{Log}}%
\global\long\def\Exp{\mathrm{Exp}}%
\newcommandx\vfnorm[3][usedefault, addprefix=\global, 1=\mu, 2=]{\|#3\|_{#1}^{#2}}%
\newcommandx\vfinnerprod[2][usedefault, addprefix=\global, 1=\mu]{\llangle#2\rrangle_{#1}}%
\global\long\def\define{:=}%
\global\long\def\differentiate{D}%
\global\long\def\tdomain{\mathcal{T}}%
\global\long\def\tm{T\manifold}%
\global\long\def\liegroup{\mathcal{G}}%
\global\long\def\groupop{\oplus}%
\global\long\def\vec#1{\mathbf{#1}}%
\global\long\def\chspace{\mathrm{LT}_{+}(m)}%
\global\long\def\lowtri{\mathrm{LT}(m)}%

\global\long\def\mannorm#1{\vertiii{#1}}%
\newcommandx\opnorm[3][usedefault, addprefix=\global, 1=\mu, 2=]{\vertiii{#3}_{#1}^{#2}}%
\newcommandx\fronorm[2][usedefault, addprefix=\global, 1=]{|#2|_{F}^{#1}}%
\global\long\def\borel{\mathscr{B}}%
\def\lin#1{{#1}}
\def\SPD{$\mathcal{S}^{+}$}
\def\SPDD{$\mathcal{S}^{+}\,\,$}
\global\long\def\spd{\mathcal{S}^{+}_m}

\global\long\def\dtispd{\mathcal{S}_{3}^{+}}%
\global\long\def\sym{\mathcal{S}(m)}

\def\red{\textcolor{red}}








\title{Additive Models for  Symmetric Positive-Definite Matrices, Riemannian Manifolds and Lie groups}

\author{Zhenhua Lin$^1$\footremember{lin}{Corresponding author; email: linz@nus.edu.sg. Research was partially supported by NUS startup grant
		R-155-000-217-133.} \and Hans-Georg M\"uller$^2$\footremember{muller}{Email: hgmueller@ucdavis.edu. Research was supported in part by NSF grant DMS-2014626.} \and Byeong U. Park$^3$\footremember{park}{Email: bupark@stats.snu.ac.kr. The work was supported by Samsung Science and Technology Foundation (project number SSTF-BA1802-01).}}

\date{%
	\textit{$^1$National University of Singapore\\
			$^2$University of California, Davis\\
			$^3$Seoul National University}%
}

\maketitle

\begin{abstract}
In this paper an additive regression model for a symmetric positive-definite matrix valued response and multiple scalar predictors is proposed. The model exploits the abelian group structure inherited from either the Log-Cholesky metric or the Log-Euclidean framework that turns the space of symmetric positive-definite matrices into a Riemannian manifold and further a bi-invariant Lie group. The additive model for responses in the space of symmetric positive-definite matrices with either of these metrics is shown to connect to an additive model on a tangent space. This connection not only entails an efficient algorithm to estimate the component functions but also allows to generalize the proposed additive model to general Riemannian manifolds that might not have a Lie group structure. Optimal asymptotic  convergence rates and normality of the estimated component functions are also established. Numerical studies show that the proposed model enjoys superior numerical performance, especially when there are multiple predictors. The practical merits of the proposed model are demonstrated by analyzing diffusion tensor brain imaging data. 
\end{abstract}

\emph{Keywords:} Riemannian manifold, Lie group, diffusion tensor, asymptotic normality, additive regression, Log-Euclidean metric, Log-Cholesky metric.

\section{Introduction\label{sec:Introduction}}

Data in the form of symmetric positive-definite matrices arise in many areas, including computer vision
\citep{Caseiro2012,Rathi2007}, signal processing \citep{Arnaudon2013,Hua2017},
medical imaging \citep{Dryden2009,Fillard2007} and neuroscience \citep{Friston2011},
among other fields and applications. For instance, they are used to model brain functional
connectivity that is often characterized by covariance matrices of blood-oxygen-level dependent
signals \citep{Huettel2008}. In diffusion tensor imaging analysis \citep{LeBihan1991},
a $3\times3$ symmetric positive  matrix that is computed  for each voxel describes the dominant shape of local diffusion
of water molecules.

The space  \SPDD  of symmetric positive matrices is a nonlinear metric space and, depending on the metric, forms a 
Riemannian manifold. Various metrics have been studied 
\citep{
pigo:14}; one criterion for the choice of  the metric is to avoid the swelling effect in the geodesics connecting two elements of \SPDD \citep{Arsigny2007} that negatively affects the Frobenius metric and various other metrics. The abundance of \SPD-valued data
in many areas stands in contrast  with the relative  sparsity of work  on their  statistical analysis, in particular regarding regression with \SPD-valued
responses, which is the theme of this paper.  Existing work includes Riemannian frameworks to analyze
diffusion tensor images with a focus on averages and modes of variation
\citep{Fletcher2007,penn:06} and various versions of 
nonparametric regression such
as spline regression \citep{Barmpoutis2007}, local constant regression
\citep{Davis2010}, intrinsic local linear regression \citep{Yuan2012},
wavelet regression \citep{Chau2019a} and Fr\'echet regression \citep{pete:19}.  Various metric, manifold and Lie
group structures have been proposed, for example, the trace metric
\citep{Lang1999}, affine-invariant metric (also called Fisher--Rao
metric) \citep{Moakher2005,penn:06,Fletcher2007}, Log-Euclidean metric
\citep{Arsigny2007}, Log-Cholesky metric \citep{Lin2019a}, scaling-rotation
distance \citep{Jung2015} and Procrustes distance \citep{Dryden2009}. As the \SPD~manifold is a Riemannian manifold and more generally a metric space, 
regression techniques developed for general Riemannian manifolds \citep[e.g., ][among many others]{Pelletier2006,Shi2009, Steinke2010a,Davis2010,Fletcher2013,Hinkle2014,Cornea2017} and metric spaces \citep{Hein2009,Petersen2019,Lin2019b} also apply to the \SPD~space.


Additive regression originated with \cite{ston:85:1} and is known to be an efficient
way of avoiding the well known curse of dimensionality problem that one faces in nonparametric regression
when the dimension of the covariate vector increases but so far has been by and large limited to the case of real-valued and functional responses. Examples for additive regression approaches for  real-valued responses include
the original work on smooth backfitting \citep{Mammen1999}, its extensions to
generalized additive models \citep{Yu2008}, to additive quantile models \citep{Lee2010},
to generalized varying coefficient models \citep{Lee2012}, and to the case of
errors-in-variables \citep{Han2018}. Additive models for functional responses include 
additive functional regression based on spline basis representation \citep{Scheipl2015},
smooth backfitting via real-valued singular components \citep{Park2018},  and  modeling 
 density-valued responses \citep{Han2020} with transformations \citep{pete:16}.  Recently, a general framework
for Hilbert-space-valued responses has been developed \citep{Jeon2020}.

This paper contains  three major contributions.  First, to the best of our knowledge, this is the first paper to study additive
regression for \SPD-valued responses. As theoretically and numerically demonstrated below,  additive regression  is less prone to the curse of dimensionality while maintaining a high degree of flexibility in the spirit of structured  nonparametric modeling. In contrast, previous studies for modeling  \SPD-valued responses focused on ``full'' nonparametric regression such as local constant/polynomial regression that are well
known to be subject  to the curse of dimensionality when there are many predictors. Second, by focusing on the Log-Cholesky
and Log-Euclidean frameworks that endow the space \SPDD
with an abelian Lie group structure and a bi-invariant metric, we  propose a novel 
intrinsic group additive regression model that exploits the abelian group structure of the  manifold \SPDD  in a regression setting for the first time. This sets our work apart, as previously only the general manifold structure of \SPDD was considered  in  regression approaches.  Third, we show that this group additive model  can be transformed into an additive model
on tangent spaces by utilizing the Riemannian logarithmic map. This not only leads to an efficient way to estimate the additive component functions, but also paves the way for extending  the additive model 
to other more general manifolds, leading to a general approach to manifold additive modeling. 

\section{Methodology\label{sec:Methodology}}

\subsection{Preliminaries on Manifolds}\label{subsec:manifold}

The proposed approaches for manifold additive modeling  are closely tied to the manifold structure of the response space in a general regression model, where we showcase the proposed approaches for the space \SPDD of symmetric positive-definite matrices. To properly define the proposed manifold additive models we  require some  basic notions for Riemannian manifolds and Lie groups that are compiled in the following.  
Let $\manifold$ be a simply connected and smooth manifold modeled on a $D$-dimensional Euclidean space. 
The tangent space $\tangentspace[\manifold]y$
at $y\in\manifold$ is a linear space consisting of velocity vectors
$\alpha^{\prime}(0)$ where $\alpha:(-1,1)\rightarrow\manifold$ represents
a differentiable curve passing through $y$, i.e., $\alpha(0)=y$.
Each tangent space $\tangentspace[\manifold]y$ is endowed with an
inner product $g_{y}$ that varies smoothly with $y$ and thus is a $D$-dimensional Hilbert space with the induced norm denoted
by $\|\cdot\|_{y}$. The inner products $\{g_{y}:y\in\manifold\}$
are collectively denoted by $g$, referred to as the Riemannian metric
of $\manifold$ that also defines a distance $d$ on $\manifold$.

A geodesic $\gamma$ is a curve defined on {[}$0,\infty)$ such that
for each $t\in[0,\infty)$, $\gamma([t,t+\epsilon])$ is the shortest
path connecting $\gamma(t)$ and $\gamma(t+\epsilon)$ for all sufficiently
small $\epsilon>0$. The Riemannian exponential map $\Exp_{y}$ at
$y\in\manifold$ is a function mapping $\tangentspace[\manifold]y$
into $\manifold$ and defined by $\Exp_{y}(u)=\gamma(1)$ with $\gamma(0)=y$
and $\gamma^{\prime}(0)=u\in\tangentspace[\manifold]y$. Conversely,
$\gamma_{y,u}(t)=\Exp_{y}(tu)$ is a geodesic starting at $y$ and
with direction $u$. For a tangent vector $u\in\tangentspace[\manifold]y$, the cut time $c_u$ is the positive number such that 
$\gamma_{y,u}([0,c_u])$
is a shortest path connecting $\gamma_{y,u}(0)$ and $\gamma_{y,u}(c_u)$,
but $\gamma_{y,u}([0,c_u+\epsilon])$ is not a shortest path for any $\epsilon>0$. Let  $\mathcal{E}_{y}=\{\Exp_y(tu):u\in\tangentspace[\manifold]y,\|u\|_y=1,0\leq t<c_u\}$. The inverse of $\Exp_{y}$, denoted by $\Log_{y}$ and called the Riemannian logarithmic map at $y$, can be defined by $\Log_{y}z=u$ for $z\in\mathcal{E}_{y}$ such that $\Exp_{y}u=z$.


A vector field $U$ is a function defined on $\manifold$ such that
$U(y)\in\tangentspace[\manifold]y$. The Levi--Civita covariant derivative
$\nabla$ on $\manifold$ is a torsion-free bilinear form that, at
$y\in\manifold$, maps a tangent vector $v\in\tangentspace[\manifold]y$
and a vector field $U$ to another tangent vector $\nabla_{v}U\in\tangentspace[\manifold]y$.
Given a curve $\gamma(t)$ on $\manifold$, $t\in I$ for a real interval $I$, a vector field $U$ along $\gamma$ is a smooth map defined
on $I$ such that $U(t)\in\tangentspace[\manifold]{\gamma(t)}$. We
say $U$ is parallel along $\gamma$ if $\nabla_{\gamma^{\prime}(t)}U=0$
for all $t\in I$. In this paper, we primarily focus on parallel vector fields along
geodesics. Let $\gamma:[0,1]\rightarrow\manifold$ be a geodesic connecting
$y$ and $z$, and $U$ a parallel vector field along $\gamma$ such
that $U(0)=u$ and $U(1)=v$. Then $v$ is the parallel transport
of $u$ along $\gamma$, denoted by $\tau_{y,z}u=v$.

When $\manifold$ is a group such that the group operation $\groupop$
and inverse $\iota:y\mapsto y^{-1}$ are smooth, $(\manifold,\groupop)$
is called a Lie group. The tangent space at the identity element $e$ is a Lie algebra denoted by $\mathfrak{g}$. It
consists of left-invariant vector fields $U$, i.e., $U(y\groupop z)=(DL_{y})(U(z))$,
where $L_{y}:z\mapsto y\groupop z$ and $DL_{y}$ is the differential of $L_{y}$.
A Riemannian metric $g$ is called left-invariant if $g_{z}(u,v)=g_{y\groupop z}((DL_{y})u,(DL_{y})v)$
for all $y,z\in\manifold$ and $u,v\in\tangentspace[\manifold]z$,
{i.e., $DL_y$ is an isometry for all $y \in \manifold$}.
Right-invariance can be defined in a similar fashion. A  metric is
bi-invariant if it is both left-invariant and right-invariant. The
Lie exponential map, denoted by $\lieexp$ that maps $\mathfrak{g}$
into $\manifold$, is defined by $\lieexp(u)=\gamma(1)$ where $\gamma:\real\rightarrow\manifold$
is the unique one-parameter subgroup such that $\gamma^{\prime}(0)=u\in\mathfrak{g}$.
Its inverse, if it exists, is denoted by $\lielog$. When $g$ is
bi-invariant, then $\lieexp=\Exp_{e}$, i.e., the Riemannian exponential
map at the identity element coincides with the Lie exponential map. 

\subsection{Additive models for symmetric positive-definite matrices}

The space of $m\times m$ symmetric positive-definite
matrices $\spd$ is  a smooth submanifold of $\real^{m\times m}$,
and its tangent spaces are identified with $\sym$, the collection
of $m\times m$ symmetric matrices. Upon endowing the tangent spaces with a  Riemannian metric
$g$, $\spd$ becomes  a Riemannian manifold.
We specifically focus on the Log-Cholesky \citep{Lin2019a} and Log-Euclidean
\citep{Arsigny2007} metrics while we also consider  extensions to other metrics
and general Riemannian manifolds.  Each
of these  metrics is associated with a group operation $\groupop$
that turns $\spd$ into an abelian Lie group in which the metric is bi-invariant.
\begin{example}[Log-Cholesky metric]
Let $\lowtri$ be the space of $m\times m$
lower triangular matrices and $\chspace\subset\lowtri$ the subspace
such that $L\in\chspace$ if all diagonal elements of $L$ are positive.
One can show that $\chspace$ is a smooth submanifold of $\lowtri$
and its tangent spaces are identified with $\lowtri$. For a fixed
$L\in\chspace$, we define a Riemannian metric $\tilde{g}$ on $\chspace$
by $\tilde{g}_{L}(A,B)=\sum_{1\leq j<i\leq m}A_{ij}B_{ij}+\sum_{j=1}^{m}A_{jj}B_{jj}L_{jj}^{-2}$,
where $A_{ij}$ denotes the element of $A$ in the $i$th row and
$j$th column. It is further turned into an abelian Lie group with
the operation $\varocircle$ defined by $L_{1}\varocircle L_{2}=\mathfrak{L}(L_{1})+\mathfrak{L}(L_{2})+\mathfrak{D}(L_{1})\mathfrak{D}(L_{2})$,
where $\mathfrak{L}(L)$ is the strict lower triangular part of $L$,
that is, $(\mathfrak{L}(L))_{ij}=L_{ij}$ if $j<i$ and $(\mathfrak{L}(L))_{ij}=0$
otherwise, and $\mathfrak{D}(L)$ is the diagonal part of $L$, that
is, a diagonal matrix whose diagonals are equal to the respective diagonals of $L$.
One can show that $\tilde{g}$ is a bi-invariant metric for the Lie
group $\chspace$ with the group operation $\varocircle$. It is well known that a symmetric positive-definite matrix $P$ is
associated with a unique matrix $L$ in $\chspace$ such that $LL^{\top}=P$.
This $L$ is called the Cholesky factor of $P$ in this paper. For
$U,V\in\tangentspace[\spd]P=\sym$, we define the metric $g_{P}(U,V)=\tilde{g}_{L}(L(L^{-1}UL^{-\top})_{\frac{1}{2}},L(L^{-1}VL^{-\top})_{\frac{1}{2}})$,
where $(S)_{\frac{1}{2}}=\mathfrak{L}(S)+\mathfrak{D}(S)/2$ for a
matrix $S$. We also turn $\spd$ into an abelian Lie group with the
operator $\oplus$ such that $P_{1}\oplus P_{2}=(L_{1}\varocircle L_{2})(L_{1}\varocircle L_{2})^{\top}$,
where $L_{1}$ and $L_{2}$ are the Cholesky factors of $P_{1}$ and
$P_{2}$, respectively. The metric $g$ is a bi-invariant metric of
the Lie group $\spd$ with the group operation $\oplus$.
\end{example}

\begin{example}[Log-Euclidean metric]
For a symmetric matrix $S$, 
$\exp(S)=I_{m}+\sum_{j=1}^{\infty}\frac{1}{j!}S^{j}$
 is a symmetric positive-definite matrix. For a symmetric positive-definite
matrix $P$, the matrix logarithmic map is $\log(P)=S$ such that
$\exp(S)=P$. The $\log$ map is a smooth map from the manifold $\spd$ to the space $\sym$. The operation  $\groupop$ defined as  $P_{1}\groupop P_{2}=\exp(\log(P_{1})+\log(P_{2}))$ turns $\spd$ into an abelian group. Define $g_{P}(U,V)=\mathrm{trace}\,\Big[\big((D_{P}\log)U\big)\big((D_{P}\log)V\big)\Big]$,
where $D_{P}\log$ denotes the differential of the  $\log$ map at $P$. 
This is a bi-invariant metric on $\spd$ with the group operation
$\groupop$.
\end{example}

For random elements  $Y\in\spd$ we define the Fr\'echet function
$F(y)=\expect d^{2}(y,Y)$, {where $d$ is the Riemannian distance function induced by
 the Log-Cholesky or the  Log-Euclidean metric}. If $F(y)<\infty$ for some $y\in\spd$
and hence $F(y)<\infty$ for all $y\in\spd$ according to the triangle
inequality, we say $Y$ is of the second order. If $Y$ is a second-order element in $\spd$, then the minimizer of $F(y)$, called the Fr\'echet mean, exists and is unique. This follows from the fact that both
Log-Cholesky and Log-Euclidean metrics turn $\spd$ into a Hamard manifold, i.e., a simply 
connected Riemannian manifold that is also a Hadamard space;  \cite{Sturm2003} showed that  
the Fr\'echet mean exists and is unique for such spaces. 

Given  scalar variables $X_{1}\in\mathcal{X}_{1},\ldots,X_{q}\in\mathcal{X}_{q}$, which are predictors 
that are paired with a manifold-valued response $Y$ and where  $\mathcal{X}_{j} \subset \real ,\, j=1,\dots q,$ are their domains,   we are now in a position to  formulate the proposed manifold additive model as follows,  
\begin{equation}
Y=\mu\groupop w_{1}(X_{1})\groupop\cdots\groupop w_{q}(X_{q})\groupop\zeta,\label{eq:group-model}
\end{equation}
where $\mu$ is the Fr\'echet mean of $Y$, each $w_{k}$ is a function that maps $X_{k}$ into $\manifold$,
 $\zeta$ is random noise which has a  Fr\'echet mean that corresponds to  the group identity
element $e$, and $\mathcal X_k$ are compact domains of $\real$. The above model generalizes the additive model for vector-valued response to $\spd$-valued and more generally Lie group responses. It includes noise impacting the responses, which cannot be  additively modeled  in the absence of a linear structure; the effect of the mean response and of the additive component functions, which again cannot be additively modeled. The Lie group operation is the natural way to substitute for the addition operation in Euclidean spaces when responses lie in a Lie group. 

The statistical task is now to 
 estimate the unknown parameter $\mu$
and the component functions $w_{1},\ldots,w_{q}$, given  a sample  of independently
and identically distributed (i.i.d.) observations of size $n$. This is challenging due to 
the lack of a linear structure in $\spd$ or more generally, for any general Lie group elements.  
The following crucial observation
about the model is the key to tackle this challenge. 
\begin{prop}
\label{prop:Lie-group}If $(\manifold,\groupop)$ is an abelian Lie
group endowed with a bi-invariant metric $g$ that turns $\manifold$ into a Hadamard manifold, then (\ref{eq:group-model}) is equivalent to
\begin{equation}
\Log_{\mu}Y=\sum_{k=1}^{q}\transport_{e,\mu}\lielog w_{k}(X_{k})+\transport_{e,\mu}\lielog\zeta.\label{eq:group-model-Log}
\end{equation}
\end{prop}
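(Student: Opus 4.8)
The plan is to apply the (globally defined, by the Hadamard hypothesis) Riemannian logarithm $\Log_{\mu}$ to both sides of (\ref{eq:group-model}) and check that the right-hand side transforms \emph{exactly} into the additive expression in (\ref{eq:group-model-Log}), with every step reversible. Two structural facts carry the argument, and each is where precisely one of the two standing hypotheses --- ``abelian'' and ``bi-invariant'' --- is used.

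\emph{Fact 1: $\lielog$ is additive, $\lielog(a\groupop b)=\lielog a+\lielog b$ for all $a,b\in\manifold$.} Since $\manifold$ is Hadamard, $\lieexp=\Exp_{e}$ (the last equality recorded in the Preliminaries for bi-invariant metrics) is a diffeomorphism of $\liealgebra$ onto $\manifold$, so $\lielog$ is globally defined; it suffices to show $\lieexp(u+v)=\lieexp(u)\groupop\lieexp(v)$. By commutativity the curve $t\mapsto\lieexp(tu)\groupop\lieexp(tv)$ is again a one-parameter subgroup, with initial velocity $u+v$, hence coincides with $t\mapsto\lieexp(t(u+v))$ by uniqueness of one-parameter subgroups; evaluate at $t=1$. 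Iterating, $\lielog\big(w_{1}(X_{1})\groupop\cdots\groupop w_{q}(X_{q})\groupop\zeta\big)=\sum_{k=1}^{q}\lielog w_{k}(X_{k})+\lielog\zeta$.

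\emph{Fact 2: $\transport_{e,\mu}$ is the restriction of $DL_{\mu}$ to $\liealgebra=\tangentspace[\manifold]{e}$, where $L_{\mu}:z\mapsto\mu\groupop z$.} Left-invariance makes $L_{\mu}$ an isometry carrying $e$ to $\mu$, so it intertwines exponential maps, $L_{\mu}\circ\Exp_{e}=\Exp_{\mu}\circ DL_{\mu}$, whence $\Log_{\mu}(\mu\groupop z)=DL_{\mu}(\Log_{e}z)=DL_{\mu}(\lielog z)$. That $DL_{\mu}$ \emph{is} parallel transport uses abelianness: for a bi-invariant metric $\nabla_{X}Y=\tfrac12[X,Y]$ on left-invariant vector fields $X,Y$ (standard), and here $[X,Y]\equiv0$, so every left-invariant vector field is parallel, in particular along the unique geodesic $\gamma(t)=\lieexp(t\,\lielog\mu)$ from $e$ to $\mu$. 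The left-invariant extension of $w\in\liealgebra$ equals $DL_{\gamma(t)}w$ at $\gamma(t)$, so transporting $w$ from $\gamma(0)=e$ to $\gamma(1)=\mu$ produces $DL_{\mu}w=\transport_{e,\mu}w$. (Consistently, a simply connected abelian Lie group with a bi-invariant metric is isometric to flat Euclidean space, which trivializes all of this; but the abstract form is what is needed for the later manifold extension.)

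Granting Facts 1--2 and the linearity of $\transport_{e,\mu}$, put $z\define w_{1}(X_{1})\groupop\cdots\groupop w_{q}(X_{q})\groupop\zeta$, so (\ref{eq:group-model}) reads $Y=\mu\groupop z$ and
\[
\Log_{\mu}Y=\Log_{\mu}(\mu\groupop z)=\transport_{e,\mu}\lielog z=\sum_{k=1}^{q}\transport_{e,\mu}\lielog w_{k}(X_{k})+\transport_{e,\mu}\lielog\zeta,
\]
which is (\ref{eq:group-model-Log}). For the converse, each of $\Log_{\mu}$, $\transport_{e,\mu}$, $\lielog$ is a bijection on its domain (inverses $\Exp_{\mu}$, $\transport_{\mu,e}$, $\lieexp$), so starting from (\ref{eq:group-model-Log}) one applies $\Exp_{\mu}$ and runs the chain backwards: using $\Exp_{\mu}\circ\transport_{e,\mu}=\Exp_{\mu}\circ DL_{\mu}=L_{\mu}\circ\Exp_{e}=L_{\mu}\circ\lieexp$ and Fact 1 again, $Y=\Exp_{\mu}(\Log_{\mu}Y)=\mu\groupop\lieexp\big(\sum_{k}\lielog w_{k}(X_{k})+\lielog\zeta\big)=\mu\groupop w_{1}(X_{1})\groupop\cdots\groupop w_{q}(X_{q})\groupop\zeta$, i.e.\ (\ref{eq:group-model}). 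All identities are understood to hold almost surely, which is immaterial. The only nonroutine point is Fact 2 --- identifying parallel transport along the geodesic through $e$ with the differential of left translation --- which is exactly where bi-invariance together with the vanishing Lie bracket is genuinely needed; the Hadamard hypothesis enters only to supply global invertibility of $\Exp$ and uniqueness of the geodesic defining $\transport_{e,\mu}$.
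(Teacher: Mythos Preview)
Your proof is correct and follows essentially the same route as the paper: both arguments reduce to the identity $\Log_{\mu}(\mu\groupop z)=\transport_{e,\mu}\lielog z$ together with the additivity $\lielog(a\groupop b)=\lielog a+\lielog b$, and both justify the former by showing that left-invariant vector fields realize parallel transport (the paper states this directly and cites an external lemma for the derivative computation, while you derive it from $\nabla_{X}Y=\tfrac12[X,Y]=0$ and the isometry property of $L_{\mu}$). Your version is somewhat more self-contained and makes the converse direction explicit, but the structure of the argument is the same.
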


Let $f_{k}(X_{k})=\transport_{e,\mu}\lielog w_{k}(X_{k})$ and $\varepsilon=\transport_{e,\mu}\lielog\zeta$.
Then according to Proposition \ref{prop:Lie-group}, one may rewrite the model \eqref{eq:group-model} as
\begin{equation}
\Log_{\mu}Y=\sum_{k=1}^{q}f_{k}(X_{k})+\varepsilon\label{eq:general-model}
\end{equation}
with $\expect\varepsilon=\expect\transport_{e,\mu}\lielog\zeta=\transport_{e,\mu}\expect\lielog\zeta=0$.
{We also note that $\expect\big(\sum_{k=1}^q f_k(X_k)\big)=0$ since $\expect\Log_\mu Y=0$. For the identifiability of the individual component functions $f_k$,
we assume that $\expect f_k(X_k) =0$. This is equivalent to assuming that the Fr\'echet mean of each $w_k(X_k)$ in (\ref{eq:group-model}) equals the group
identity element $e$.}
These considerations  motivate to  estimate the component functions  $w_{k}$ through estimation
of the $f_{k}$, as follows.
\begin{enumerate}[label=\textup{Step \arabic*:},align=left,labelsep=15pt,leftmargin=*]
\item Compute the sample Fr\'echet mean $\hat{\mu}$. Closed-form
expressions of $\hat{\mu}$ are available for many special cases including the Log-Cholesky and Log-Euclidean
metrics.
\item Compute $\Log_{\hat{\mu}}Y_{i}$. There is also a closed-form expression available 
for the Log-Cholesky metric. For the Log-Euclidean metric, there is no closed-form
expression, and a numerical approach is required.
\item Solve the system of integral equations
\begin{equation}\label{sbfeqn}
\hat{f}_{k}(x_{k})=\hat{m}_{k}(x_{k})-{n^{-1}\sum_{i=1}^n \Log_{\hat\mu} Y_i}
-\sum_{j:j\neq k}\int_{\mathcal{X}_{j}}\hat{f}_{j}(x_{j})\frac{\hat{p}_{kj}(x_{k},x_{j})}{\hat{p}_{k}(x_{k})}\diffop x_{j},\quad1\leq k\leq q,
\end{equation}
subject to the constraints $\int_{\mathcal{X}_{j}}\hat{f}_{k}(x_{k})\hat{p}_{k}(x_{k})\diffop x_{k}=0$
for $1\leq k\leq q$. Here, $\hat{p}_{k}(x_{k})=n^{-1}\sum_{i=1}^{n}K_{h_{k}}(x_{k},X_{ik})$,
$\hat{p}_{kj}(x_{k},x_{j})=n^{-1}\sum_{i=1}^{n}K_{h_{k}}(x_{k},X_{ik})K_{h_{j}}(x_{j},X_{ij})$, and
\begin{equation}\label{margest}
\hat m_{k}(x_{k})=n^{-1}\hat{p}_{k}(x_{k})^{-1}\sum_{i=1}^{n}K_{h_{k}}(x_{k},X_{ik})\Log_{\hat{\mu}}Y_{i}.
\end{equation}
Here $K_{h_j}$ is a kernel function with 
$\int_{\mathcal{X}_{j}} K_{h_j}(u,v) \,du=1$ for all $v \in \mathcal{X}_{j}$, see \cite{Jeon2020}. Note that $n^{-1}\sum_{i=1}^n \Log_{\hat\mu} Y_i=0$ since $\hat\mu$ is the sample Fr\'echet mean.

\item Finally, estimate $w_{k}(x_{k})$ by $\hat{w}_{k}(x_{k})=\lieexp\{\tau_{\hat{\mu},e}\hat{f}_{k}(x_{k})\}$.
\end{enumerate}
Step 3 is a multivariate version of the standard
Smooth Backfitting (SBF) system of equations \citep{Mammen1999}. Since the tangent space
$\tangentspace[\spd]{\hat{\mu}}$ is also a Hilbert space, the above
SBF system of equations can be interpreted from a Bochner integral perspective, see  \cite{Jeon2020}, where also the empirical selection of bandwidths $h_k$ is discussed. 

\subsection{Extension to general manifolds}

When $\spd$ is endowed with another metric, such as the affine-invariant
metric \citep{Moakher2005,penn:06,Fletcher2007}, it is no longer
an abelian group with a bi-invariant metric, and Proposition \ref{prop:Lie-group}
does not hold. However, 
model \eqref{eq:general-model} might still
apply, since it depends only on two ingredients, the existence and
uniqueness of the Fr\'echet mean $\mu$, and the well-definedness
of $\Log_{\mu}Y$. These ingredients are satisfied for $\spd$ endowed
with the affine-invariant metric, for which $\spd$ becomes a Hadamard
manifold. For general metrics that might feature positive sectional curvature, or
more generally, for manifolds  beyond $\spd$, we require additional conditions
for  model \eqref{eq:general-model} to be applicable,  as follows.

Let $(\manifold,g)$ now denote a general Riemannian manifold and
$Y$ a random element on $\manifold$. Assume that:\stepcounter{condA}
\begin{enumerate}[label={(A\arabic{condA})},align=left,labelsep=15pt,leftmargin=*]
\item\label{cond:frechet-mean-Y}  The minimizer of $F$ exists and is unique.
\end{enumerate}
As previously mentioned, this condition is satisfied when $\manifold$ is a Hadamard manifold.
For other manifolds, we refer readers to \cite{Bhattacharya2003}
and \cite{afsa:11} for conditions that imply \ref{cond:frechet-mean-Y}. 

For a nonempty subset $A\subset\manifold$,
let $d(y,A)=\inf\{d(y,z):\, z\in A\}$ be the distance between $y$ and the set $A$. For a positive real number $\epsilon$,
we denote $A^{\epsilon}=\{y:\, d(y,A)<\epsilon\}$ and $A^{-\epsilon}=\manifold\backslash (\manifold\backslash A)^\epsilon$. When $A=\emptyset$,
set $A^{\epsilon}=\emptyset$. We make the following
assumption; it  is not needed for the case of a Hadamard manifold. 

\stepcounter{condA}
\begin{enumerate}[label={(A\arabic{condA})},align=left,labelsep=15pt,leftmargin=*]
\item\label{cond:cut-locus}$\prob\{Y\in \mathcal{E}_{\mu}^{-\epsilon_0}\}=1$ for some $\epsilon_0>0$, where $\mathcal{E}_{\mu}$ is defined in Section \ref{subsec:manifold}.
\end{enumerate}
If (A1) and (A2) are satisfied, the proposed manifold additive model \eqref{eq:general-model} remains well
defined, and the first three steps of the estimation method described
in the previous subsection are still valid and can be employed to estimate $f_{1},\ldots,f_{q}$,
with $\spd$ replaced by $\manifold$. 

\section{Theory\label{sec:Theory}}

We first establish convergence rates and asymptotic normality of the estimators for the mean
and the component functions for general manifolds in the manifold additive model \eqref{eq:general-model} and then provide additional details for  the space $\spd$ endowed with
either the Log-Cholesky metric or the Log-Euclidean metric.
We consider a manifold $\manifold$
that satisfies at least one  of the following two properties but not necessarily both.

\begin{enumerate}[label={(M\arabic*)},align=left,labelsep=15pt,leftmargin=*]


\item \label{cond:M2}$\manifold$ is a finite-dimensional Hadamard manifold that has sectional curvature
bounded from below by $\mathfrak{c}_{0}\leq0$.

\item \label{cond:M3}\lin{$\manifold$ is a complete compact Riemannian
manifold.} 
\end{enumerate}

The space $\spd$ with the Log-Cholesky metric, Log-Euclidean metric
or affine-invariant metric is a manifold that satisfies  \ref{cond:M2}, \lin{while  
the unit sphere that is used to model
compositional data \citep{dai:17:1} serves as an example of a manifold that satisfies  \ref{cond:M3}.} 


To establish the convergence rate of $\hat{\mu}$, we also make the
following assumptions.
\begin{enumerate}[label={(A\arabic{condA})},align=left,labelsep=15pt,leftmargin=*]

\stepcounter{condA}
\item\label{cond:manifold}

The manifold $\manifold$ satisfies at least  one of  the conditions 
\ref{cond:M2} and \ref{cond:M3}.

\stepcounter{condA}
\item\label{cond:hessian}

For some constant $\mathfrak{c}_{2}>0$, $F(y)-F(\mu)\geq\mathfrak{c}_{2}d^{2}(y,\mu)$
when $d(y,\mu)$ is sufficiently small.

\stepcounter{condA}
\item\label{cond:hessian-d2} \lin{For some constant $\mathfrak{c}_{3}>0$, for all $y,z\in\manifold$, the linear operator $H_{y,z}:T_z\manifold\rightarrow T_z\manifold$, defined by $g_z(H_{y,z}u,v)=g_z(\nabla_u \Log_zy,v)$ for $u,v\in T_z\manifold$, has its operator norm bounded by $\mathfrak{c}_3\{1+d(z,y)\}$.}
\end{enumerate}
\lin{The operator $H_{y,z}$ in the technical condition \ref{cond:hessian-d2} is indeed the Hessian of the squared distance function $d$; see also the equation (5.4) of \cite{Kendall2011}. It is superfluous if the manifold $\manifold$ is compact, and is satisfied by manifolds of zero curvature. It can also be replaced by a uniform moment condition on the operator norm of $H_{z,Y}$ over all $z$ in a small local neighborhood of $\mu$.} 
We then obtain a parametric convergence rate for the Fr\'echet mean estimates $\hat\mu$.
\begin{prop}
\label{prop:mu-rate}Assume that \ref{cond:frechet-mean-Y}, \ref{cond:manifold} and \ref{cond:hessian} hold and $Y$ is of the second order. Then \newline $d(\hat{\mu},\mu)=\Op(n^{-1/2}).$
\end{prop}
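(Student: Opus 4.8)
The plan is to treat $\hat\mu$ as an $M$-estimator and combine a consistency argument with the quadratic separation \ref{cond:hessian} and a local maximal inequality, in the manner of the classical rate theorem for $M$-estimators. Writing $(P_n-P)g\define n^{-1}\sum_{i=1}^{n}g(Y_i)-\expect g(Y)$, so that $F_n(z)\define n^{-1}\sum_{i=1}^{n}d^{2}(z,Y_i)=F(z)+(P_n-P)d^{2}(z,\cdot)$, the starting point is that $\hat\mu$ minimizes $F_n$ while $\mu$ minimizes $F$, hence $F_n(\hat\mu)\le F_n(\mu)$. I would proceed in three steps.

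First I would establish consistency $d(\hat\mu,\mu)=\op(1)$. Since $Y$ is of the second order, $\expect d^{2}(y_0,Y)<\infty$ for some (hence every) $y_0\in\manifold$, and the triangle inequality gives $F(z)\ge\tfrac12 d^{2}(z,y_0)-\expect d^{2}(y_0,Y)\to\infty$ as $d(z,y_0)\to\infty$; the analogous bound for $F_n$, with $\expect d^{2}(y_0,Y)$ replaced by $n^{-1}\sum_i d^{2}(y_0,Y_i)=\Op(1)$, confines $\hat\mu$ to a fixed geodesic ball $\ball$ about $\mu$ with probability tending to one (this is vacuous when $\manifold$ is compact). On $\ball$ the class $\{d^{2}(z,\cdot):z\in\ball\}$ is Lipschitz in $z$ with integrable envelope $\asymplt 1+d^{2}(\mu,Y)$, so $\sup_{z\in\ball}|F_n(z)-F(z)|=\op(1)$ by the uniform law of large numbers, using that $\manifold$ is finite-dimensional under \ref{cond:manifold}; together with the uniqueness of the minimizer in \ref{cond:frechet-mean-Y} the usual argmin-consistency argument gives $d(\hat\mu,\mu)=\op(1)$.

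Next, fixing $\rho>0$ small enough for \ref{cond:hessian} to apply, on the event $\{d(\hat\mu,\mu)<\rho\}$ I would chain
\[ \mathfrak{c}_{2}\,d^{2}(\hat\mu,\mu)\ \le\ F(\hat\mu)-F(\mu)\ \le\ \bigl(F(\hat\mu)-F_n(\hat\mu)\bigr)+\bigl(F_n(\mu)-F(\mu)\bigr)\ =\ -(P_n-P)\bigl[d^{2}(\hat\mu,\cdot)-d^{2}(\mu,\cdot)\bigr], \]
the middle inequality using $F_n(\mu)\ge F_n(\hat\mu)$. It then remains to control the recentered empirical process. Because $d$ is $1$-Lipschitz, $|d^{2}(z,y)-d^{2}(\mu,y)|\le d(z,\mu)\{d(z,y)+d(\mu,y)\}$ and, for $z,z'$ in a chart around $\mu$ of radius $\delta$, $|d^{2}(z,y)-d^{2}(z',y)|\asymplt d(z,z')\{d(\mu,y)+\delta\}$; hence the functions $y\mapsto d^{2}(z,y)-d^{2}(\mu,y)$ indexed by $z$ in a $\delta$-ball of a $D$-dimensional chart form a class that is Lipschitz in the parameter, with square-integrable envelope $\asymplt\delta\{1+d(\mu,Y)\}$ (again using second order) and bracketing entropy $\asymplt D\log(1/\epsilon)$, so a bracketing-entropy maximal inequality yields $\expect\sup_{d(z,\mu)<\delta}\bigl|(P_n-P)[d^{2}(z,\cdot)-d^{2}(\mu,\cdot)]\bigr|\asymplt\delta\,n^{-1/2}$. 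Feeding $\phi_n(\delta)=\delta n^{-1/2}$ and the separation exponent $2$ from \ref{cond:hessian} into the standard rate theorem — or, equivalently, a peeling argument over the dyadic shells $\{d(\hat\mu,\mu)\in(2^{j-1}Mn^{-1/2},2^{j}Mn^{-1/2}]\}$, on which the left side above is at least $\mathfrak{c}_{2}(2^{j-1}Mn^{-1/2})^{2}$ while the expected supremum over $\{d(z,\mu)\le 2^{j}Mn^{-1/2}\}$ is $\asymplt 2^{j}Mn^{-1}$, so Markov's inequality bounds the shell probability by a constant multiple of $2^{-j}M^{-1}$, summable in $j$ — shows $\prob\{d(\hat\mu,\mu)>Mn^{-1/2}\}$ is arbitrarily small for $M$ large, uniformly in $n$, i.e.\ $d(\hat\mu,\mu)=\Op(n^{-1/2})$.

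The step I expect to be the main obstacle is the maximal inequality in the noncompact Hadamard case \ref{cond:M2}: there the squared-distance functions $d^{2}(z,\cdot)$ are unbounded, so a uniform-entropy bound for uniformly bounded classes is unavailable and one must use the bracketing version with a square-integrable envelope — which is exactly where the second-order assumption on $Y$ is indispensable — after first passing to a single Cartan--Hadamard chart around $\mu$, legitimate only once consistency has been secured. The compact case \ref{cond:M3} is comparatively routine since all distances are bounded, but the two alternatives of \ref{cond:manifold} have to be carried in parallel. As a sanity check it is worth noting that on a Hadamard manifold Sturm's variance inequality \citep{Sturm2003} already supplies \ref{cond:hessian} with $\mathfrak{c}_{2}=1$, so there it is automatic; for manifolds with some positive curvature it is a genuine assumption.
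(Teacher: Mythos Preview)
Your argument is correct and follows the same $M$-estimator template the paper uses---consistency, quadratic separation \ref{cond:hessian}, then a local maximal inequality and peeling---but you carry it out from first principles whereas the paper simply invokes Corollary~1 of \citet{scho:19} and verifies its hypotheses. Concretely: the paper imports consistency from Theorem~2.3 of \citet{Bhattacharya2003}; in place of your direct Lipschitz bound $|d^{2}(z,y)-d^{2}(\mu,y)|\le d(z,\mu)\{d(z,y)+d(\mu,y)\}$ it uses the \emph{weak quadruple inequality} $d^{2}(y,u)-d^{2}(z,u)-d^{2}(y,v)+d^{2}(z,v)\lesssim d(y,z)d(u,v)$, which Sch\"otz shows holds automatically for Hadamard spaces and for bounded metric spaces, so that \ref{cond:M2} and \ref{cond:M3} are handled in one stroke; and for the local entropy bound it appeals to the Bishop--G\"unther volume comparison \citep[Eq.~(3.34)]{Gray2004} rather than your Lipschitz-in-parameter bracketing. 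The second-order moment on $Y$ plays exactly the role you identify in the noncompact case. Your self-contained route is slightly longer but avoids needing the quadruple inequality as a separate structural fact; the paper's route is shorter because the packaging is delegated to \citet{scho:19}.
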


To obtain  convergence rates of the estimated component functions, we
require some additional  conditions that are  standard in the literature
on additive regression.

\begin{enumerate}[label={(B\arabic{condB})},align=left,labelsep=15pt,leftmargin=*]
\stepcounter{condB}
\item\label{cond:kernel}The kernel function $K$ is positive, symmetric,  Lipschitz continuous and supported on $[-1,1]$. 
\stepcounter{condB}
\item \label{cond:bandwidth} The bandwidths $h_k$ satisfy $n^{1/5}h_k\rightarrow \alpha_k>0$.
\stepcounter{condB}
\item \label{cond:X-density} The joint density $p$ of $X_1,\ldots,X_q$ is bounded away from zero
and infinity on $\mathcal X \equiv \mathcal X_1\times\cdots\times\mathcal X_q$.
The densities  $p_{kj}$ are continuously differentiable for $1\leq j\neq k \leq q$.
\stepcounter{condB}
\item\label{cond:smoothness-f} The additive functions $f_k$ are twice continuously (Fr\'echet) differentiable.
\end{enumerate}

Without loss of generality, assume $\mathcal{X}_{k}=[0,1]$ for all
$k$ and let {$\mathcal{I}_{k}=[2h_k,1-2h_k]$.
The moment condition on $\varepsilon$ in the following theorem is required to control the effect of the error of $\hat\mu$ as an estimator of $\mu$
on the discrepancies of $\Log_{\hat{\mu}}Y_{i}$ from $\Log_{\mu}Y_{i}$ after parallel transportation, see Lemma~\ref{error-LogY}.}
It is a mild requirement and is satisfied for example
when the manifold is compact or $\|\varepsilon\|_{\mu}$ follows a sub-exponential
distribution.
\begin{thm}\label{rate}
Assume that the conditions \ref{cond:frechet-mean-Y}--\ref{cond:hessian-d2} and
\ref{cond:kernel}--\ref{cond:smoothness-f} are satisfied, that {$\expect\|\varepsilon\|_{\mu}^\alpha<\infty$ for some
$\alpha \ge 10$ and that $\expect(\|\varepsilon\|_{\mu}^2 \,|\,X_j=\cdot)$ are bounded on $\mathcal X_j$, respectively, for $1 \le j \le q$}.
Then,
\begin{align*}
\max_{1\leq k\leq q}\int_{\mathcal{I}_{k}}\|\tau_{\hat{\mu},\mu}\hat{f}_{k}(x_{k})-f_{k}(x_{k})\|_{\mu}^{2}\,
p_{k}(x_{k})\diffop x_{k}&=\Op(n^{-4/5}),\\
\max_{1\leq k\leq q}\int_{\mathcal{X}_{k}}\|\tau_{\hat{\mu},\mu}\hat{f}_{k}(x_{k})-f_{k}(x_{k})\|_{\mu}^{2}\,
p_{k}(x_{k})\diffop x_{k}&=\Op(n^{-3/5}),
\end{align*}
where $\tau_{\hat{\mu},\mu}$ is the parallel transport operator
along geodesics.
\end{thm}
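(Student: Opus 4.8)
The plan is to reduce the problem to the existing theory of smooth backfitting for responses valued in a \emph{fixed} Hilbert space, after accounting for two complications: the working responses $\Log_{\hat\mu}Y_i$ live in the random tangent space $T_{\hat\mu}\manifold$ rather than in $T_\mu\manifold$, and they differ from the ``true'' tangent responses $\Log_\mu Y_i$ because of the estimation error in $\hat\mu$. First I would transport everything to the fixed Hilbert space $T_\mu\manifold$. Since $\transport_{\hat\mu,\mu}:T_{\hat\mu}\manifold\to T_\mu\manifold$ is a linear isometry that does not depend on any $x_k$, and every step of the scheme \eqref{sbfeqn}--\eqref{margest} combines the responses only with scalar kernel weights and scalar integrations, applying $\transport_{\hat\mu,\mu}$ throughout shows that $\transport_{\hat\mu,\mu}\hat f_k$ coincides with the smooth backfitting estimator computed in $T_\mu\manifold$ from the responses $Z_i:=\transport_{\hat\mu,\mu}\Log_{\hat\mu}Y_i$, subject to the transported constraints. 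Writing $Z_i=\Log_\mu Y_i+r_i$ with $r_i:=\transport_{\hat\mu,\mu}\Log_{\hat\mu}Y_i-\Log_\mu Y_i$, and using the linearity (and, under \ref{cond:X-density}, the uniform boundedness of the inverse) of the backfitting operator in the responses, we get $\transport_{\hat\mu,\mu}\hat f_k=\tilde f_k+\hat f_k^{(r)}$, where $\tilde f_k$ is the infeasible backfitting estimator built from the true tangent responses $\Log_\mu Y_i$ and $\hat f_k^{(r)}$ the one built from $\{r_i\}_{i=1}^n$.

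The geometric heart of the argument is the control of $r_i$, which is the content of Lemma~\ref{error-LogY}. By \ref{cond:cut-locus} the observations satisfy $Y_i\in\mathcal E_\mu^{-\epsilon_0}$, and since $d(\hat\mu,\mu)=\Op(n^{-1/2})\to 0$ by Proposition~\ref{prop:mu-rate}, with probability tending to one \emph{all} $Y_i$ lie in $\mathcal E_{\hat\mu}$, so that $\Log_{\hat\mu}Y_i$ is well defined for every $i$. Differentiating $z\mapsto\Log_z Y_i$ along the minimizing geodesic from $\mu$ to $\hat\mu$, its covariant derivative is exactly the operator $H_{Y_i,z}$ of \ref{cond:hessian-d2} (the Hessian of $\tfrac12 d^2(\cdot,Y_i)$, after parallel transport along that geodesic), so \ref{cond:hessian-d2} gives $\|r_i\|_\mu\lesssim d(\hat\mu,\mu)\{1+d(\mu,Y_i)\}$. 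Because the $\mathcal X_k$ are compact and the $f_k$ are continuous, $d(\mu,Y_i)=\|\sum_{k}f_k(X_{ik})+\varepsilon_i\|_\mu\le C+\|\varepsilon_i\|_\mu$, hence $\max_{1\le i\le n}\|r_i\|_\mu\lesssim d(\hat\mu,\mu)\{1+\max_i\|\varepsilon_i\|_\mu\}$. Under $\expect\|\varepsilon\|_\mu^\alpha<\infty$ with $\alpha\ge 10$ one has $\max_{i\le n}\|\varepsilon_i\|_\mu=\Op(n^{1/\alpha})$, so $\max_i\|r_i\|_\mu=\Op(n^{-1/2+1/\alpha})=\Op(n^{-2/5})$; here $\alpha=10$ is precisely the threshold that pins the uniform perturbation at the nonparametric scale $n^{-2/5}$. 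Moreover $n^{-1}\sum_i\|r_i\|_\mu^2\lesssim d(\hat\mu,\mu)^2\,n^{-1}\sum_i(1+\|\varepsilon_i\|_\mu)^2=\Op(n^{-1})$.

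The two terms are then handled separately. For $\tilde f_k$, the model $\Log_\mu Y=\sum_k f_k(X_k)+\varepsilon$ with responses in the Hilbert space $T_\mu\manifold$ is a standard additive model, and conditions \ref{cond:kernel}--\ref{cond:smoothness-f} together with the boundedness of $\expect(\|\varepsilon\|_\mu^2\,|\,X_j=\cdot)$ are exactly the hypotheses under which the Hilbertian smooth backfitting theory of \cite{Jeon2020} yields $\int_{\mathcal I_k}\|\tilde f_k-f_k\|_\mu^2\,p_k=\Op(n^{-4/5})$ and $\int_{\mathcal X_k}\|\tilde f_k-f_k\|_\mu^2\,p_k=\Op(n^{-3/5})$; the slower rate over the whole domain is caused by the $O(h_k)$ boundary bias of the Nadaraya--Watson-type marginal smoothers on a region of Lebesgue measure $\asymp h_k\asymp n^{-1/5}$, which contributes $\asymp h_k^3=n^{-3/5}$ while the interior contributes $\asymp h_k^4+(nh_k)^{-1}=n^{-4/5}$. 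For $\hat f_k^{(r)}$, the backfitting operator is, with probability tending to one under \ref{cond:X-density}, a bounded linear operator on the relevant product of $L^2(p_k)$-spaces of $T_\mu\manifold$-valued functions (again \cite{Jeon2020}), and a one-line Jensen/kernel estimate gives $\int_{\mathcal X_k}\|\hat m_k^{(r)}(x_k)\|_\mu^2\,p_k(x_k)\,\diffop x_k\lesssim n^{-1}\sum_i\|r_i\|_\mu^2=\Op(n^{-1})$ for the marginal smoother formed from $\{r_i\}$, so $\max_k\int_{\mathcal X_k}\|\hat f_k^{(r)}\|_\mu^2\,p_k=\Op(n^{-1})$; the uniform bound $\max_i\|r_i\|_\mu=\Op(n^{-2/5})$ is what makes the corresponding remainder terms inside the backfitting iteration negligible at the nonparametric scale. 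Combining via the triangle inequality in $L^2(p_k;T_\mu\manifold)$ and taking the maximum over the finitely many $k$ gives the two stated rates.

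\textbf{Main obstacle.} The crux is the uniform control of the response perturbation $r_i$: one must simultaneously guarantee that $\Log_{\hat\mu}Y_i$ is defined for all $i$ (the cut-locus issue, resolved by \ref{cond:cut-locus} and Proposition~\ref{prop:mu-rate}), establish the differential-geometric identity that realizes $\nabla_u\Log_z Y_i$ as the Hessian operator $H_{Y_i,z}$ and bound it by $\mathfrak{c}_3\{1+d(z,Y_i)\}$ via \ref{cond:hessian-d2}, which requires careful bookkeeping of parallel-transport identifications along the geodesic from $\mu$ to $\hat\mu$, and convert the factor $d(\mu,Y_i)$ into $\|\varepsilon_i\|_\mu$ so that the moment hypothesis $\alpha\ge 10$ yields the $\Op(n^{-2/5})$ uniform rate matching the nonparametric scale. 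Everything else is either routine differential-geometric estimation or a direct appeal to the Hilbertian smooth backfitting theory of \cite{Jeon2020}.
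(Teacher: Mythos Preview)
Your proposal is correct and follows essentially the same route as the paper: transport the SBF estimator to $T_\mu\manifold$ via the linear isometry $\tau_{\hat\mu,\mu}$, split $\tau_{\hat\mu,\mu}\hat f_k-f_k$ into the infeasible piece $\tilde f_k-f_k$ (handled by the Hilbertian SBF theory of \cite{Jeon2020}) and the perturbation $\tau_{\hat\mu,\mu}\hat f_k-\tilde f_k$ driven by $r_i=\tau_{\hat\mu,\mu}\Log_{\hat\mu}Y_i-\Log_\mu Y_i$, and control the latter through Proposition~\ref{prop:mu-rate}, condition \ref{cond:hessian-d2}, and the moment bound on $\varepsilon$ (your derivation of Lemma~\ref{error-LogY}). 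The only cosmetic difference is that the paper bounds the perturbation in sup norm at rate $\Op(n^{-(\alpha-2)/\alpha})$ and cites the proof of Theorem~4.1 in \cite{Jeon2020} to propagate it through the backfitting system, whereas you additionally note the $L^2$ bound $n^{-1}\sum_i\|r_i\|_\mu^2=\Op(n^{-1})$ and invoke boundedness of the empirical backfitting operator to obtain a slightly sharper $\Op(n^{-1})$ for $\int\|\hat f_k^{(r)}\|_\mu^2\,p_k$; either suffices for the stated rates.
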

The following corollary is an immediate consequence. 
\begin{cor}
Under the conditions of Theorem~\ref{rate}, if the $\manifold$ is $\spd$ endowed with either the Log-Cholesky metric or the Log-Euclidean metric, then 
\begin{align*}
\max_{1\leq k\leq q}\int_{\mathcal{I}_{k}}\|\lielog\hat{w}_{k}(x_{k})-\lielog w_{k}(x_{k})\|_e^{2}\,
p_{k}(x_{k})\diffop x_{k}&=\Op(n^{-4/5}),\\
\max_{1\leq k\leq q}\int_{\mathcal{X}_{k}}\|\lielog\hat{w}_{k}(x_{k})-\lielog w_{k}(x_{k})\|_e^{2}\,
p_{k}(x_{k})\diffop x_{k}&=\Op(n^{-3/5}).
\end{align*}
\end{cor}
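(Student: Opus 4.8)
The plan is to reduce the corollary directly to Theorem~\ref{rate} by unwinding the definition of $\hat w_k$ from Step~4 and exploiting the fact that, for both the Log-Cholesky and the Log-Euclidean metrics, $\spd$ is a flat, simply connected manifold. First I would note that because each of these metrics turns $\spd$ into a Hadamard manifold, $\Exp_e=\lieexp$ is a global diffeomorphism, so from $\hat w_k(x_k)=\lieexp\{\tau_{\hat\mu,e}\hat f_k(x_k)\}$ we get $\lielog\hat w_k(x_k)=\tau_{\hat\mu,e}\hat f_k(x_k)$; similarly, from $f_k(X_k)=\tau_{e,\mu}\lielog w_k(X_k)$ and the fact that $\tau_{\mu,e}$ (parallel transport along the reversed geodesic, which is unique on a Hadamard manifold) inverts $\tau_{e,\mu}$, we obtain $\lielog w_k(x_k)=\tau_{\mu,e}f_k(x_k)$. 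Hence
\[
\lielog\hat w_k(x_k)-\lielog w_k(x_k)=\tau_{\hat\mu,e}\hat f_k(x_k)-\tau_{\mu,e}f_k(x_k).
\]

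Next I would establish that parallel transport on $\spd$ with either metric is path-independent and satisfies the cocycle identity $\tau_{\hat\mu,e}=\tau_{\mu,e}\circ\tau_{\hat\mu,\mu}$. For the Log-Euclidean metric, $\log\colon\spd\to\sym$ is a Riemannian isometry onto $\sym$ equipped with the flat trace inner product, so $\spd$ is flat and simply connected; for the Log-Cholesky metric one checks likewise that the Cholesky map followed by $L\mapsto(\mathfrak{L}(L),\log\mathfrak{D}(L))$ is a Riemannian isometry of $\spd$ onto a Euclidean space, so again $\spd$ is flat and simply connected. In both cases parallel transport $\tau_{p,q}$ equals $(d\phi_q)^{-1}\circ(d\phi_p)$ for the underlying isometry $\phi$, which is manifestly independent of the path and multiplicative. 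Substituting this,
\[
\lielog\hat w_k(x_k)-\lielog w_k(x_k)=\tau_{\mu,e}\big(\tau_{\hat\mu,\mu}\hat f_k(x_k)-f_k(x_k)\big).
\]

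Finally, since $\tau_{\mu,e}\colon T_\mu\spd\to T_e\spd$ is a linear isometry, $\|\lielog\hat w_k(x_k)-\lielog w_k(x_k)\|_e=\|\tau_{\hat\mu,\mu}\hat f_k(x_k)-f_k(x_k)\|_\mu$ pointwise in $x_k$. Multiplying by $p_k(x_k)$, integrating over $\mathcal{I}_k$ and over $\mathcal{X}_k$ respectively, taking the maximum over $k$, and invoking Theorem~\ref{rate}---whose integrands now coincide term-by-term with ours---yields the claimed rates $\Op(n^{-4/5})$ and $\Op(n^{-3/5})$. The only nonroutine point is the path-independence of parallel transport, i.e.\ verifying flatness and simple connectivity of $\spd$ for the two metrics; once that is in place the corollary is a pure substitution, which is why it is stated as an immediate consequence.
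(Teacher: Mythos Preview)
Your proof is correct and is precisely the argument the paper has in mind when it calls the corollary ``an immediate consequence'' of Theorem~\ref{rate}; the paper gives no proof of its own, and your unwinding of Step~4 together with the isometry of parallel transport is the natural way to fill in the details. One minor remark: instead of arguing flatness via explicit isometries to Euclidean space, you could obtain the cocycle identity $\tau_{\hat\mu,e}=\tau_{\mu,e}\circ\tau_{\hat\mu,\mu}$ directly from the bi-invariant abelian Lie group structure, since (as used in the proof of Proposition~\ref{prop:Lie-group}) parallel transport is realized by left-invariant vector fields, i.e.\ $\tau_{y,z}=(DL_z)\circ(DL_y)^{-1}$, from which the composition law is immediate---but your flatness argument is equally valid and amounts to the same thing.
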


To derive the asymptotic distribution of $\hat{f}_k$, we define $\mathscr{C}_{k}(x)=\expect\{\varepsilon\otimes \varepsilon\mid X_{k}=x\}$,
where $u\otimes v: T_\mu\manifold \to T_\mu\manifold$ is a tensor product operator such that $(u\otimes v)z=g_{\mu}(u,z)v$. In addition, define
\begin{align}
\Sigma_{k}(x)&=\alpha_{k}^{-1}p_{k}(x)^{-1}\int K(u)^2 \diffop u \cdot \mathscr{C}_{k}(x),\label{eq:Sigma}\\
\delta_{k}(x) &=\frac{p_{k}^{\prime}(x)}{p_{k}(x)}\int u^2 K(u)\,\diffop u \cdot f_{k}^{\prime}(x),\label{eq:delta-k}\\
\delta_{jk}(u,v)&=\frac{\partial p_{jk}(u,v)}{\partial v}\frac{1}{p_{jk}(u,v)}\int u^2 K(u)\,\diffop u \cdot f_{k}^{\prime}(v),\label{eq:delta-jk}\\
\tilde{\Delta}_{k}(x) & =\alpha_{k}^{2}\cdot \delta_{k}(x)+\sum_{j:j\neq k}\alpha_{j}^{2}\int_{\mathcal{X}_{j}}\frac{p_{kj}(x,u)}{p_{k}(x)}
\cdot \delta_{kj}(x,u)\,\diffop u,\label{eq:Delta-k}
\end{align}
where $\alpha_k$ are the constants in the condition \ref{cond:bandwidth}.
Let $(\Delta_{1},\ldots,\Delta_{q})$ be a solution of the system
of equations
\begin{equation}
\Delta_{k}(x)=\tilde{\Delta}_{k}(x)-\sum_{j:j\neq k}\int_{\mathcal{X}_{j}}\frac{p_{kj}(x,u)}{p_{k}(x)}
\cdot \Delta_{j}(u)\diffop u,\quad1\leq k\leq q, \label{eq:Delta-k-eq}
\end{equation}
satisfying the constraints
\begin{equation}
\int_{\mathcal{X}_{k}}p_{k}(x)\cdot \Delta_{k}(x)\diffop x
=\alpha_{k}^{2}\cdot \int_{\mathcal{X}_{k}}p_{k}(x) \cdot \delta_{k}(x)\diffop x,\quad1\leq k\leq q.\label{eq:Delta-k-constraint}
\end{equation}
Finally, define $c_{k}(x)=\frac{1}{2}\int u^2 K(u)\,\diffop u \cdot f_k^{\prime\prime}(x)$
and $\theta_{k}(x)=\alpha_{k}^{2}\cdot c_{k}(x)+\Delta_{k}(x)$.

We assume that
\begin{enumerate}[label={(B\arabic{condB})},align=left,labelsep=15pt,leftmargin=*]
	\stepcounter{condB}
	\item\label{cond:C-continuous}
	
	$\expect\{\varepsilon\otimes \varepsilon\mid X_{k}=\cdot\}$ are continuous operators on $\mathcal X_k$ for all $1 \le k \le q$ and operators 
	$\expect\{\varepsilon\otimes \varepsilon\mid X_j = \cdot, X_{k}=\cdot\}$ are bounded on $\mathcal X_j \times \mathcal{X}_{k}$
	for all $1 \le j \neq k \le q$.
	
	\stepcounter{condB}
	\item\label{cond:p-diff}
	
	$\partial p/\partial x_{k}$, $k=1,\ldots,q$, exist and are bounded
	on $\mathcal{X}=\prod_{k=1}^{q}\mathcal{X}_{k}$.
	
\end{enumerate}
Note that condition \ref{cond:C-continuous} is superfluous if the random noise $\varepsilon$ is independent of the predictors $X_1,\ldots,X_q$.

Let $N_{\mu}(\mathbf{x})$ be the product measure $N(\theta_{1}(x_{1}),\Sigma_{1}(x_{1}))\times\cdots\times N(\theta_{q}(x_{q}),\Sigma_{q}(x_{q}))$
on $(T_\mu\manifold)^{q}$, where $N(\theta,\Sigma)$ denotes a Gaussian
measure on $T_\mu\manifold$ with the mean vector $\theta$ and covariance
operator $\Sigma$. For a set $A$, let $\mathrm{Int}(A)$ denote the interior of $A$.
\begin{thm}\label{asympdist}Assume that conditions \ref{cond:frechet-mean-Y}--\ref{cond:hessian-d2} and 
	\ref{cond:kernel}--\ref{cond:p-diff} hold, that {$\expect\|\varepsilon\|_{\mu}^{\alpha}<\infty$
		for some $\alpha>10$ and that there exists $\alpha'>5/2$ such that $\expect(\|\varepsilon\|_\mu^{\alpha'}\mid X_{k}=\cdot)$
		are bounded on $\mathcal X_k$ for all $1 \le k \le q$.} Then, for $\mathbf{x}=(x_{1},\ldots,x_{q})\in\mathrm{Int}(\mathcal{X})$,
	it holds that $\left(n^{2/5}\big(\tau_{\hat\mu,\mu}\hat{f}_k(x_{k})-f_k(x_{k})\big):1\leq k\leq q \right)\rightarrow N_\mu(\mathbf{x})$
	in distribution. In addition, $n^{2/5}\left(\sum_{k=1}^{q}\tau_{\hat{\mu},\mu}\hat{f}_{k}(x_{k})-\sum_{k=1}^{q} f_{k}(x_{k})\right)$
	converges to $N_\mu(\theta(\mathbf{x}),\Sigma(\mathbf{x}))$, where $\theta(\mathbf{x})=\sum_{k=1}^{q}\theta_{k}(x_{k})$
	and $\Sigma(\mathbf{x})=\Sigma_{1}(x_{q})+\cdots+\Sigma_{q}(x_{q})$.
\end{thm}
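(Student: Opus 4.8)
The plan is to transfer the whole problem to the fixed, finite-dimensional Hilbert space $T_\mu\manifold$ and then invoke the central limit theory for smooth backfitting with Hilbert-space-valued responses, treating the estimation error of $\hat\mu$ as an asymptotically negligible perturbation. Because the parallel transport $\tau_{\hat\mu,\mu}$ is a linear isometry and, conditionally on $X_1,\dots,X_n$, the backfitting map $(\text{responses})\mapsto(\hat f_1,\dots,\hat f_q)$ defined by \eqref{sbfeqn}--\eqref{margest} is linear (the design objects $\hat p_k,\hat p_{kj}$ and the kernels do not involve the responses, and the constraints are linear), one may write $\tau_{\hat\mu,\mu}\hat f_k=\hat f_k^{(0)}+\hat f_k^{(R)}$ for $1\le k\le q$, where $\hat f_k^{(0)}$ is the smooth backfitting estimator built from the oracle responses $\{\Log_\mu Y_i\}_{i=1}^n$ and $\hat f_k^{(R)}$ the one built from the residual responses $R_i:=\tau_{\hat\mu,\mu}\Log_{\hat\mu}Y_i-\Log_\mu Y_i$; the centering terms split accordingly, the one for $\tau_{\hat\mu,\mu}\hat f_k$ vanishing exactly and the one for $\hat f_k^{(0)}$ being $\Op(n^{-1/2})$ and hence negligible.

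For the oracle part I would apply the smooth backfitting limit theorem for responses in a separable Hilbert space, \cite{Jeon2020}, which extends \cite{Mammen1999}; since $T_\mu\manifold$ is finite-dimensional this is effectively the multivariate statement. Under \ref{cond:kernel}--\ref{cond:p-diff}, $\expect\varepsilon=0$, $\expect f_k(X_k)=0$, and the conditional moment bound with $\alpha'>5/2$, one gets, jointly over $k=1,\dots,q$, that $n^{2/5}\bigl(\hat f_k^{(0)}(x_k)-f_k(x_k)\bigr)$ converges in distribution with asymptotic covariance operator $\Sigma_k(x_k)$ from \eqref{eq:Sigma} and asymptotic mean $\theta_k(x_k)=\alpha_k^2 c_k(x_k)+\Delta_k(x_k)$, where $\alpha_k^2 c_k$ is the usual second-order kernel bias (using \ref{cond:smoothness-f}) and $\Delta_k$ solves the coupled linear system \eqref{eq:Delta-k-eq}--\eqref{eq:Delta-k-constraint}, propagating the Nadaraya--Watson design bias $\delta_k,\delta_{jk}$ through the backfitting operator. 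The limit is the product Gaussian measure $N_\mu(\mathbf x)$, because at a point $\mathbf x$ with distinct coordinates the leading stochastic term of $\hat f_k^{(0)}(x_k)$ is the marginal term $\{np_k(x_k)\}^{-1}\sum_{i=1}^n K_{h_k}(x_k,X_{ik})\varepsilon_i$, and for $j\ne k$ the cross-covariance of the $n^{2/5}$-scaled marginal terms for $j$ and $k$ is of order $n^{-1/5}$ and thus negligible relative to the $\Op(1)$ marginal variances, so the coordinates decouple asymptotically; the statement for $\sum_{k=1}^q\tau_{\hat\mu,\mu}\hat f_k(x_k)$ then follows with mean $\theta(\mathbf x)=\sum_k\theta_k(x_k)$ and covariance $\Sigma(\mathbf x)=\sum_k\Sigma_k(x_k)$.

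For the residual part I would show $\sup_{x_k}\|\hat f_k^{(R)}(x_k)\|_\mu=\op(n^{-2/5})$ for every $k$. By \ref{cond:hessian-d2} the operator $H_{y,z}$ is the Hessian of $z\mapsto\tfrac12 d^2(z,y)$, so a first-order Taylor expansion of $z\mapsto\tau_{z,\mu}\Log_z Y_i$ at $z=\mu$ gives $R_i=-H_{Y_i,\mu}(\Log_\mu\hat\mu)+r_i$, with $\|H_{Y_i,\mu}\|_\mu\le\mathfrak c_3\{1+d(\mu,Y_i)\}$ and a second-order remainder obeying $\|r_i\|_\mu\lesssim d(\hat\mu,\mu)^2\{1+d(\mu,Y_i)\}$; this is the quantitative content of Lemma~\ref{error-LogY}. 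Because $\hat f_k^{(R)}$ is a backfitting smooth of the $R_i$ at bandwidth $h_k\asymp n^{-1/5}$, its leading part is a locally weighted average of the operators $H_{Y_i,\mu}$ applied to the common factor $\Log_\mu\hat\mu$, which is $\Op(n^{-1/2})$ by Proposition~\ref{prop:mu-rate}; the weighted averages of the bounded-moment operators $H_{Y_i,\mu}$ are $\Op(1)$ uniformly in $x_k$, and it is precisely to keep this uniform against the factor $h_k^{-1}$ that the assumptions $\expect\|\varepsilon\|_\mu^\alpha<\infty$ with $\alpha>10$ and the boundedness of $\expect(\|\varepsilon\|_\mu^2\mid X_k=\cdot)$ are used, via a maximal inequality; the $r_i$ contribute at most $\Op(d(\hat\mu,\mu)^2)=\Op(n^{-1})$. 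Combining the oracle and residual parts by Slutsky's theorem on $T_\mu\manifold$ gives the asserted joint limit.

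The main obstacle is the residual step: linearizing $\Log_{\hat\mu}$ about $\mu$ and controlling the residual responses $R_i$ \emph{uniformly}, despite their common, data-dependent factor $\Log_\mu\hat\mu$. Condition \ref{cond:hessian-d2} (the growth bound on the Hessian of the squared distance) and the strong moment condition $\alpha>10$ are exactly what is needed for this; once Lemma~\ref{error-LogY} provides the quantitative expansion of $\tau_{\hat\mu,\mu}\Log_{\hat\mu}Y_i-\Log_\mu Y_i$, the rest is a routine application of the existing smooth backfitting central limit machinery on a Hilbert space.
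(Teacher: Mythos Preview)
Your proposal follows the same architecture as the paper's proof: split $\tau_{\hat\mu,\mu}\hat f_k$ into an oracle backfitting estimator built from $\Log_\mu Y_i$ (the paper's $\tilde f_k$, your $\hat f_k^{(0)}$) and a residual piece $\delta_k=\tau_{\hat\mu,\mu}\hat f_k-\tilde f_k$ (your $\hat f_k^{(R)}$) via linearity of the backfitting map and of parallel transport, invoke the Hilbert-space smooth backfitting CLT of \cite{Jeon2020} for the oracle part to obtain the bias $\theta_k$ and covariance $\Sigma_k$, and then show the residual is uniformly $o_P(n^{-2/5})$ so that Slutsky applies.

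The one substantive difference is in how you treat the residual. The paper does \emph{not} Taylor-expand $R_i$; it simply uses Lemma~\ref{error-LogY} in its crude form, $\max_i\|R_i\|_\mu=\Op(n^{-(\alpha-2)/(2\alpha)})$, and then appeals to the argument of Theorem~4.1 in \cite{Jeon2020} to pass this uniform bound through the backfitting operator (the marginal smoothers are convex combinations, so $\sup_{x_k}\|\delta_k(x_k)\|_\mu$ inherits the same rate). The requirement $\alpha>10$ then enters for exactly one reason: it is equivalent to $(\alpha-2)/(2\alpha)>2/5$, so that $n^{-(\alpha-2)/(2\alpha)}=o(n^{-2/5})$. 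Your first-order expansion $R_i=-H_{Y_i,\mu}(\Log_\mu\hat\mu)+r_i$ is a genuine refinement that, if justified, would give $\sup_{x_k}\|\hat f_k^{(R)}(x_k)\|_\mu=\Op(n^{-1/2})$ under much weaker moment conditions; but this is not what Lemma~\ref{error-LogY} actually proves (it only gives the crude maximum bound via inequality (5.7) of \cite{Kendall2011}), and your attribution of $\alpha>10$ to ``keeping uniform against the factor $h_k^{-1}$'' in the weighted operator averages is misplaced. There is no $h_k^{-1}$ to fight here, since the Nadaraya--Watson weights form a convex combination. This is an expository wobble rather than a gap: either the paper's crude route or your refined one closes the residual step, and the remainder of your outline coincides with the paper.
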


When $\manifold$ is $\spd$ equipped  with either the Log-Cholesky metric or the Log-Euclidean metric, the above asymptotic normality can be formulated on the Lie algebra $\liealgebra$. To this end, assume that  $\Sigma_1^{\textup{SPD}},\ldots,\Sigma_q^{\textup{SPD}}$  and $\Delta_1^{\textup{SPD}},\ldots,\Delta_q^{\textup{SPD}}$ are defined by equations \eqref{eq:Sigma}--\eqref{eq:Delta-k-constraint} with $\mathscr C_k(x)$ and $f_k$ replaced by $\expect\{\lielog\zeta \otimes \lielog\zeta \mid X_k=x\}$ and $\psi_k\define \lielog w_k$, respectively. Also, let $c_k^{\textup{SPD}}= \frac{1}{2}\int u^2 K(u)\,\diffop u \cdot \psi_k^{\prime\prime}(x)$
and $\theta_{k}^{\textup{SPD}}(x)=\alpha_{k}^{2}\cdot c_{k}^{\textup{SPD}}(x)+\Delta_{k}^{\textup{SPD}}(x)$, for $k=1,\ldots,q$.
The following corollary is an immediate consequence of Theorem~\ref{asympdist}, by noting that the manifold $\spd$ when equipped  with the Log-Cholesky metric or the Log-Euclidean metric satisfies the conditions \ref{cond:frechet-mean-Y}--\ref{cond:hessian} when the second moment of the random noise $\zeta$ is finite.
\begin{cor}
	Assume that the conditions  
	\ref{cond:kernel}--\ref{cond:p-diff} hold and that {$\expect\|\lielog\zeta\|_{\mu}^{\alpha}<\infty$
		for some $\alpha>10$. Furthermore, assume that there exists $\alpha'>5/2$ such that $\expect(\|\lielog\zeta\|_e^{\alpha'}\mid X_{k}=\cdot)$
		are bounded on $\mathcal X_k$ for all $1 \le k \le q$.}
	For $\spd$ endowed with either the Log-Cholesky metric or the Log-Euclidean metric, for $\mathbf{x}=(x_{1},\ldots,x_{q})\in\mathrm{Int}(\mathcal{X})$,
	it holds that $\left(n^{2/5}\big(\lielog\hat{w}_k(x_{k})-\lielog w_k(x_{k})\big):1\leq k\leq q \right)\rightarrow N_{I_m}(\mathbf{x})$
	in distribution. In addition, $n^{2/5}\left(\sum_{k=1}^{q}\lielog\hat{w}_{k}(x_{k})-\sum_{k=1}^{q} \lielog w_{k}(x_{k})\right)$
	converges to $N_{I_m}(\theta(\mathbf{x}),\Sigma(\mathbf{x}))$, where $I_m$ is the $m\times m$ identity matrix,  $\theta(\mathbf{x})=\sum_{k=1}^{q}\theta_{k}^{\textup{SPD}}(x_{k})$
	and $\Sigma(\mathbf{x})=\Sigma_{1}^{\textup{SPD}}(x_{q})+\cdots+\Sigma_{q}^{\textup{SPD}}(x_{q})$.
\end{cor}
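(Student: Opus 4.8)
The plan is to push the conclusion of Theorem~\ref{asympdist} forward from the tangent space $T_\mu\spd$ to the Lie algebra $\liealgebra=T_{I_m}\spd=\sym$ through the parallel transport map $\tau_{\mu,e}$, using that $\spd$ with either the Log-Cholesky or the Log-Euclidean metric is an abelian Lie group carrying a bi-invariant metric and is therefore flat and simply connected (a flat Hadamard manifold), so that $\tau_{\mu,e}$ is a linear isometry and parallel transport is path-independent. First I would check that Theorem~\ref{asympdist} applies. Vanishing curvature makes $\spd$ satisfy \ref{cond:M2} with $\mathfrak c_0=0$, hence \ref{cond:manifold}, and it makes \ref{cond:hessian-d2} automatic (see the remark after \ref{cond:hessian-d2}); strong convexity of the Fr\'echet function on a Hadamard manifold gives \ref{cond:hessian} with $\mathfrak c_2=1$; \ref{cond:cut-locus} is vacuous in the Hadamard case; and \ref{cond:frechet-mean-Y} holds as soon as $\expect\|\lielog\zeta\|_e^2<\infty$, which follows from $\expect\|\lielog\zeta\|_e^{\alpha}<\infty$ ($\alpha>10>2$) and the boundedness of the $w_k$ on the compact $\mathcal X_k$, so that $Y$ is of the second order. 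Since $\varepsilon=\tau_{e,\mu}\lielog\zeta$ with $\tau_{e,\mu}$ a Hilbert-space isometry, $\|\varepsilon\|_\mu=\|\lielog\zeta\|_e$ and $(\tau_{e,\mu}a)\otimes(\tau_{e,\mu}b)=\tau_{e,\mu}(a\otimes b)\tau_{\mu,e}$ for $a,b\in\liealgebra$, so the moment hypotheses and \ref{cond:C-continuous} coincide with those imposed on $\lielog\zeta$, while \ref{cond:kernel}--\ref{cond:smoothness-f} and \ref{cond:p-diff} are unchanged. Thus Theorem~\ref{asympdist} gives $\big(n^{2/5}(\tau_{\hat\mu,\mu}\hat f_k(x_k)-f_k(x_k)):1\le k\le q\big)\to N_\mu(\mathbf x)$ in distribution.

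Next I would establish the algebraic identity $\lielog\hat w_k(x_k)-\lielog w_k(x_k)=\tau_{\mu,e}\big(\tau_{\hat\mu,\mu}\hat f_k(x_k)-f_k(x_k)\big)$. The final estimation step gives $\hat w_k(x_k)=\lieexp\{\tau_{\hat\mu,e}\hat f_k(x_k)\}$, and since $\spd$ is flat and simply connected, $\lieexp=\Exp_e$ is a global diffeomorphism, so $\lielog\hat w_k(x_k)=\tau_{\hat\mu,e}\hat f_k(x_k)$; by the definition $f_k(x_k)=\tau_{e,\mu}\psi_k(x_k)$ with $\psi_k=\lielog w_k$, one has $\tau_{\mu,e}f_k(x_k)=\lielog w_k(x_k)$; and path-independence of parallel transport gives $\tau_{\mu,e}\circ\tau_{\hat\mu,\mu}=\tau_{\hat\mu,e}$, which combine to the claimed identity. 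Applying the fixed (nonrandom) linear isometry $\tau_{\mu,e}$ coordinatewise and invoking the continuous mapping theorem, $\big(n^{2/5}(\lielog\hat w_k(x_k)-\lielog w_k(x_k)):1\le k\le q\big)$ converges in distribution to the pushforward measure $\tau_{\mu,e}N_\mu(\mathbf x)$ on $\liealgebra^q$.

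It then remains to identify $\tau_{\mu,e}N_\mu(\mathbf x)$ with $N_{I_m}(\mathbf x)$. A linear isometry sends a Gaussian $N(\theta,\Sigma)$ on $T_\mu\spd$ to $N(\tau_{\mu,e}\theta,\tau_{\mu,e}\Sigma\tau_{e,\mu})$ on $\liealgebra$ and sends product measures to product measures, so I would check $\tau_{\mu,e}\Sigma_k(x_k)\tau_{e,\mu}=\Sigma_k^{\textup{SPD}}(x_k)$ and $\tau_{\mu,e}\theta_k(x_k)=\theta_k^{\textup{SPD}}(x_k)$. For the covariance, the conjugation identity above gives $\mathscr C_k(x)=\tau_{e,\mu}\expect\{\lielog\zeta\otimes\lielog\zeta\mid X_k=x\}\tau_{\mu,e}$, so $\tau_{\mu,e}\Sigma_k(x)\tau_{e,\mu}$ is exactly \eqref{eq:Sigma} with $\mathscr C_k$ replaced by $\expect\{\lielog\zeta\otimes\lielog\zeta\mid X_k=\cdot\}$, i.e. $\Sigma_k^{\textup{SPD}}(x)$. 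For the mean, $\tau_{\mu,e}f_k=\psi_k$ and linearity of $\tau_{\mu,e}$ give $\tau_{\mu,e}f_k'=\psi_k'$ and $\tau_{\mu,e}f_k''=\psi_k''$; since $c_k,\delta_k,\delta_{jk},\tilde\Delta_k$ in \eqref{eq:delta-k}--\eqref{eq:Delta-k} are obtained from $f_k$ and its derivatives by scalar-coefficient integral operators that commute with the fixed map $\tau_{\mu,e}$, and since the system \eqref{eq:Delta-k-eq}--\eqref{eq:Delta-k-constraint} for $\Delta_k$ is linear with scalar coefficients, applying $\tau_{\mu,e}$ throughout yields $\tau_{\mu,e}\Delta_k=\Delta_k^{\textup{SPD}}$ and hence $\tau_{\mu,e}\theta_k(x_k)=\alpha_k^2 c_k^{\textup{SPD}}(x_k)+\Delta_k^{\textup{SPD}}(x_k)=\theta_k^{\textup{SPD}}(x_k)$. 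This gives the first assertion. For the sum statement, $\sum_k\lielog\hat w_k(x_k)-\sum_k\lielog w_k(x_k)=\tau_{\mu,e}\big(\sum_k\tau_{\hat\mu,\mu}\hat f_k(x_k)-\sum_k f_k(x_k)\big)$, and the sum of the coordinates of the product Gaussian $N_\mu(\mathbf x)$ is $N\big(\sum_k\theta_k(x_k),\sum_k\Sigma_k(x_k)\big)$, which $\tau_{\mu,e}$ pushes to $N_{I_m}(\theta(\mathbf x),\Sigma(\mathbf x))$.

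The only genuinely nontrivial ingredient is the path-independence of parallel transport on $\spd$, i.e. the flatness of these two metrics; this is exactly what permits rewriting $\tau_{\hat\mu,e}\hat f_k=\tau_{\mu,e}\tau_{\hat\mu,\mu}\hat f_k$ and so reduces the whole statement to Theorem~\ref{asympdist}. Everything else is bookkeeping: transporting the mean and covariance of a Gaussian through the fixed Hilbert-space isometry $\tau_{\mu,e}$, and exploiting that the smooth backfitting system is linear with scalar coefficients in its inhomogeneous terms, so that the defining operators of $\theta_k^{\textup{SPD}}$ and $\Sigma_k^{\textup{SPD}}$ are precisely the $\tau_{\mu,e}$-conjugates of those of $\theta_k$ and $\Sigma_k$.
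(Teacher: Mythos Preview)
Your proposal is correct and is precisely the detailed argument underlying the paper's one-line justification: the paper states only that the corollary ``is an immediate consequence of Theorem~\ref{asympdist}, by noting that the manifold $\spd$ when equipped with the Log-Cholesky metric or the Log-Euclidean metric satisfies the conditions \ref{cond:frechet-mean-Y}--\ref{cond:hessian} when the second moment of the random noise $\zeta$ is finite,'' and gives no further proof. Your verification of \ref{cond:frechet-mean-Y}--\ref{cond:hessian-d2}, the identity $\lielog\hat w_k-\lielog w_k=\tau_{\mu,e}(\tau_{\hat\mu,\mu}\hat f_k-f_k)$ via flatness (vanishing curvature of an abelian Lie group with bi-invariant metric, giving path-independent parallel transport), and the bookkeeping that pushes $N_\mu(\mathbf x)$ to $N_{I_m}(\mathbf x)$ through the linear isometry $\tau_{\mu,e}$ are exactly what ``immediate'' hides here.
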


\section{Simulations\label{sec:Simulations}}

To illustrate the numerical performance of the proposed manifold additive model estimators, we conducted
simulations for $\manifold=\spd$ for $m=3$ endowed
with the Log-Cholesky metric. We set $\mathcal{X}_{k}=[0,1]$
for $k=1,\ldots,q$. The predictors $X_1,\ldots,X_k$ are independently and identically sampled from the uniform distribution on $[0,1]$. 
We also fix $\mu$ to be the identity matrix. We then generate the response variable $Y$ by $Y=\mu\groupop w(X_1,\ldots,X_q)\groupop \zeta$, where $w(X_1,\ldots,X_q)=\lieexp\tau_{\mu,e}f(X_1,\ldots,X_q)$ with three settings for $f$:
\begin{itemize}
	\item[I.] $f(x_1,\ldots,x_q)=\sum_{k=1}^q f_k(x_k)$ with $f_k(x_k)$ being an $m\times m$ matrix whose $(j,l)$-entry is $g(x_k;j,l,q)=\exp(-|j-l|/q)\sin(2q\pi(x_k-(j+l)/q))$;
	\item[II.] $f(x_1,\ldots,x_q)=f_{12}(x_1,x_2)+\sum_{k=3}^q f_k$, where $f_k$ is defined as in the setting I, while $f_{12}(x_1,x_2)$ is an $m\times m$ matrix whose $(j,l)$-entry is $g(x_1;j,l,q)g(x_2,j,l,q)$;
	\item[III.] $f(x_1,\ldots,x_q)=f_{12}(x_1,x_2)\prod_{k=3}^q f_k(x_k)$, where $f_{12}(x_1,x_2)$ is an $m\times m$ matrix whose $(j,l)$-entry is $\exp\{-(j+l)(x_1+x_2)\}$, and $f_k(x_k)$ is  an $m\times m$ matrix whose $(j,l)$-entry is $\sin(2\pi x_k)$.
\end{itemize}

The random noise $\zeta$ is generated according to $\lielog \zeta =\sum_{j=1}^6 Z_j v_j$, where $Z_1,\ldots,Z_6$ are independently sampled from $N(0,\sigma^2)$, and $v_1,\ldots,v_6$ are an orthonormal basis of the tangent space $T_e\spd$. The signal-to-ratio (SNR) is measured by $\textup{SNR}=\expect{\|\lielog w(X_1,\ldots,X_q)\|_e^2}/\expect \|\lielog\zeta\|_e^2$. We tweak the value of  the parameter $\sigma^2$ to cover two settings for the  SNR, namely, $\textup{SNR}=2$ and $\textup{SNR}=4$. We note  that the model for $f$  in I is an additive model, while models  II and III are not additive. In particular, 
model III has no additive components and thus represents the most challenging scenario for the proposed additive regression. We consider $q=3$ and $q=4$ to probe the effect of the dimensionality of the predictor vector and sample sizes $n=50,100,200.$

\begin{table}[t]
	\caption{Prediction RMSE and its Monte Carlo standard error\label{tab:pmse}}
	\begin{center}
		\renewcommand*{\arraystretch}{1.2}
		\begin{tabular}{|c|c|c|c|c|c|c|}
			\hline
			\multirow{2}{*}{Setting} & \multirow{2}{*}{$q$} & \multirow{2}{*}{$n$} & \multicolumn{2}{c|}{MAM} & \multicolumn{2}{c|}{ILPR}\tabularnewline
			\cline{4-7}
			& & & SNR=2 & SNR=4 & SNR=2 & SNR=4\tabularnewline
			\hline\multirow{6}{*}{I} & \multirow{3}{*}{3} & 50 & 0.591 (0.056) & 0.508 (0.057) & 1.046 (0.147) & 1.042 (0.146)  \tabularnewline
			\cline{3-7}
			& & 100 & 0.413 (0.026) & 0.339 (0.020) & 0.912 (0.076) & 0.909 (0.092)  \tabularnewline
			\cline{3-7}
			& & 200 & 0.300 (0.017) & 0.230 (0.012) & 0.787 (0.030) & 0.785 (0.050)  \tabularnewline
			\cline{2-7}
			& \multirow{3}{*}{4} & 50 & 0.772 (0.062) & 0.685 (0.063) & 1.075 (0.100) & 1.056 (0.100)  \tabularnewline
			\cline{3-7}
			& & 100 & 0.523 (0.029) & 0.436 (0.036) & 0.964 (0.033) & 0.952 (0.040)  \tabularnewline
			\cline{3-7}
			& & 200 & 0.354 (0.019) & 0.284 (0.013) & 0.918 (0.026) & 0.902 (0.024)  \tabularnewline
			\hline
			\multirow{6}{*}{II} & \multirow{3}{*}{3} & 50 & 0.624 (0.029) & 0.581 (0.024) & 0.948 (0.208) & 0.914 (0.208)  \tabularnewline
			\cline{3-7}
			& & 100 & 0.544 (0.017) & 0.516 (0.013) & 0.769 (0.078) & 0.755 (0.195)  \tabularnewline
			\cline{3-7}
			& & 200 & 0.498 (0.009) & 0.481 (0.008) & 0.645 (0.048) & 0.624 (0.115)  \tabularnewline
			\cline{2-7}
			& \multirow{3}{*}{4} & 50 & 0.687 (0.035) & 0.619 (0.032) & 1.069 (0.150) & 1.054 (0.158)  \tabularnewline
			\cline{3-7}
			& & 100 & 0.553 (0.023) & 0.503 (0.018) & 0.933 (0.088) & 0.924 (0.095)  \tabularnewline
			\cline{3-7}
			& & 200 & 0.471 (0.014) & 0.438 (0.010) & 0.862 (0.045) & 0.838 (0.040)  \tabularnewline
			\hline
			\multirow{6}{*}{III} & \multirow{3}{*}{3} & 50 & 0.801 (0.067) & 0.789 (0.065) & 0.808 (0.220) & 0.791 (0.269)  \tabularnewline
			\cline{3-7}
			& & 100 & 0.750 (0.045) & 0.744 (0.045) & 0.681 (0.210) & 0.688 (0.258)  \tabularnewline
			\cline{3-7}
			& & 200 & 0.725 (0.050) & 0.721 (0.050) & 0.489 (0.083) & 0.467 (0.138)  \tabularnewline
			\cline{2-7}
			& \multirow{3}{*}{4} & 50 & 0.871 (0.079) & 0.866 (0.079) & 1.000 (0.237) & 1.009 (0.272)  \tabularnewline
			\cline{3-7}
			& & 100 & 0.871 (0.077) & 0.870 (0.078) & 0.874 (0.191) & 0.891 (0.230)  \tabularnewline
			\cline{3-7}
			& & 200 & 0.857 (0.063) & 0.857 (0.064) & 0.776 (0.115) & 0.776 (0.139)  \tabularnewline
			\hline
		\end{tabular}
	\end{center}
\end{table}

The quality of the estimation is measured by the prediction root mean squared error on an independent test dataset of 1000 observations, defined by 
$$\textup{RMSE}=\sqrt{\frac{1}{1000}\sum_{i=1}^{1000} d^2(\hat\mu\groupop\hat{w}_1(\tilde x_{i1})\groupop\cdots\groupop \hat w_q(\tilde{x}_{iq}),\tilde{Y}_i)},$$
where $(\tilde{x}_{i1},\ldots,\tilde{x}_{iq},\tilde Y_i)$, $i=1,\ldots,1000$, are i.i.d. observations in the test data. As a comparison method for the proposed  manifold additive model (MAM), we also implement the intrinsic local polynomial regression (ILPR) proposed in  \cite{Yuan2012}, which is a fully nonparametric approach. Each simulation setting is repeated 100 times, and the Monte Carlo prediction RMSE and its standard error are shown in Table \ref{tab:pmse}.

These results lead to the following observations. 
First, as $q=3$ is increased to $q=4$, the prediction RMSE of both methods increases for most cases, with the increase of ILPR much more prominent in almost all cases. This suggests that MAM is less subject to the curse of dimensionality. Second, when the model is correctly specified as in Setting I, the proposed model outperforms ILPR by a significant margin. When the underlying model is not a fully additive model but contains some additive components, such as the model in Setting II, the MAM approach  still clearly outperforms ILPR. When the true model has no additive components, the fully nonparametric approach ILPR is  favored in some cases, especially when $q=3$. However, in the case $q=4$ and the sample size is relatively small, i.e., $n=50$ or $n=100$, the additive model still enjoys better performance even if misspecified.  In summary, when there are several predictors or the sample size is relatively small, the additive model is often preferrable, and when the number of predictors is limited or the sample size is large, a fully nonparametric approach can be competitive.

\section{Application to Diffusion Tensor Imaging\label{sec:Applications}}
We apply the proposed additive model to study diffusion tensors from Alzheimer's Disease Neuroimaging Initiative\footnote{http://adni.loni.usc.edu/} (ADNI). Diffusion tensors are $3\times 3$ symmetric positive-definite matrices that characterize diffusion of water molecules in tissues and convey rich information about brain tissues with important applications in tractography. They are utilized to investigate the integrity of axons and to aid in the diagnosis of brain related diseases. In statistical modeling,  diffusion tensors are typically considered to be random elements in the space $\dtispd \subset \mathcal{S}$, and were studied by \cite{Fillard2005,arsigny2006,Lenglet2006,Pennec2006,zhou:16,Fletcher2007,Dryden2009,Zhu2009,Pennec2020}, among many others. A traditional Euclidean framework for diffusion tensors suffers from significant swelling effects that undesirably inflate the diffusion tensors \citep{Arsigny2007} and impede their interpretation. Consequently,  statistical models have adopted a non-Euclidean approach for diffusion tensor analysis. In the analysis reported below we use the Log-Cholesky metric \citep{Lin2019a} to analyze diffusion tensors; it  is a metric designed to eliminate  the swelling effect.

We focus on the hippocampus that plays a central role in Alzheimer's disease \citep{Lindberg2012}. In the ADNI study, subjects were invited to visit a center for acquisition of their brain images as well as assessment of their memory, executive functioning  and language ability. For each raw diffusion tensor image, a standard preprocessing protocol that includes  denoising, eddy current and motion correction, skull stripping, bias correction and normalization is applied. 
Diffusion tensors for each hippocampal voxel are derived from the  preprocessed images. Then the Log-Cholesky mean \citep{Lin2019a} of the diffusion tensors is computed. This results in an average diffusion tensor for each raw image. The goal is to study the relation between the average hippocampal diffusion tensor and memory, executive functioning  and language ability of the subject. To this end, we utilize the neuropsychological summary scores available from ADNI and documented in \cite{Gibbons2012}. In this study we only consider visits that feature both a properly acquired diffusion tensor image and neuropsychological summary scores. After excluding visits with outliers and missing values, there are  $590$ data tuples of the form $(Y,X_{1},X_{2},X_{3})$, where $Y$ is the average diffusion tensor, which serves as response, while $X_{1},X_2,X_3$, standardized to the interval $[0,1]$, are the predictors and consist of scores for memory, executive functioning and language ability, respectively;  $181$ are  from cognitively normal (CN) subjects and the remainder from patients  who were diagnosed as having either early mild cognitive impairment, mild cognitive impairment, late mild cognitive impairment or Alzheimer's disease (AD). We applied the proposed manifold additive model \eqref{eq:group-model} to the CN and AD groups, respectively.

The resulting component functions $w_1(x_1),w_2(x_2),w_3(x_3)$ are depicted in Figure \ref{fig:dti}, where each diffusion tensor is visualized as an ellipsoid whose volume corresponds to the determinant of the tensor, and the color encodes fractional anisotropy which describes the degree of anisotropy of diffusion of water molecules. For a $3\times 3$ symmetric positive-definite matrix $A$ that represents a diffusion tensor, its fractional anisotropy is defined by $$\text{FA}=\sqrt{\frac{3}{2}\frac{(\rho_1-\bar \rho)^2+(\rho_2-\bar\rho)^2+(\rho_3-\bar\rho)^2}{\rho_1^2+\rho_2^2+\rho_3^2}}$$ with eigenvalues $\rho_1,\rho_2,\rho_3$  of $A$ and $\bar\rho=(\rho_1+\rho_2+\rho_3)/3$. Larger values of fractional anisotropy suggest that movement of the water molecules is constrained by structures such as white matter fibers. In Figure \ref{fig:dti} the first component function $w_1$ suggests that the diffusion tensors are differently associated with memory for the CN and AD groups. In addition, the function $w_1$ of the CN group overall exhibits larger fractional anisotropy. Similar results are observed for the  associations with language ability. In contrast, the association patterns in the two groups are rather similar for executive functioning. The relatively weak association between the average hippocampal diffusion tensor and executive functioning suggests that the hippocampus may play less of a role for executive functioning. In contrast, the significant memory loss and language impairment that accompany Alzheimer's disease appear to be at least partially mediated by the hippocampus. This is in line with previous findings that the   integrity of the hippocampus is not only critical to memory \citep{muller2005}  but also  important for the  flexible use and processing of language \citep{Duff2012}. 

\begin{figure}[t]  
	\centering
	\begin{minipage}[c]{\linewidth}
		\centering
		\begin{tikzpicture}[scale=0.94, every node/.style={scale=0.94}]
		\newcommand\x{0.8}
		
		\node at (0+\x,0) {\includegraphics[width=0.9\textwidth]{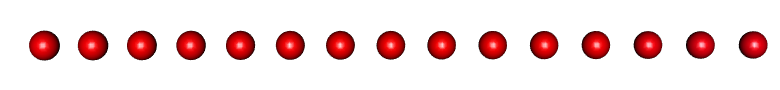}};
		\node at (0+\x,-1.3) {\includegraphics[width=0.9\textwidth]{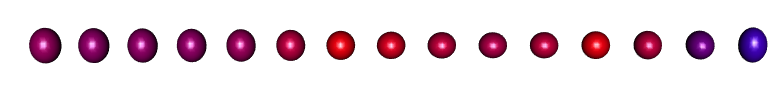}};
		
		\node at (0+\x,-3.3) {\includegraphics[width=0.9\textwidth]{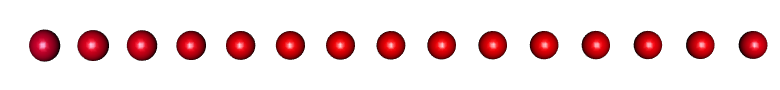}};
		\node at (0+\x,-4.6) {\includegraphics[width=0.9\textwidth]{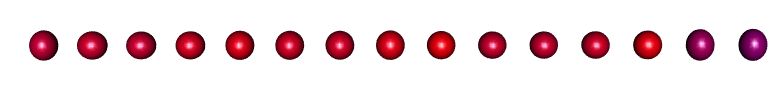}};
		
		\node at (0+\x,-6.6) {\includegraphics[width=0.9\textwidth]{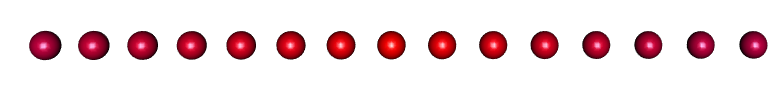}};
		\node at (0+\x,-7.9) {\includegraphics[width=0.9\textwidth]{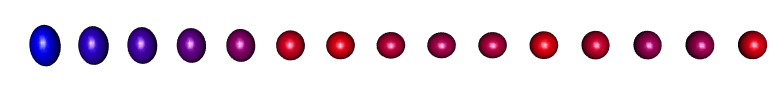}};
		
		\node at (-7.9,-4.2) {\includegraphics[scale=0.5]{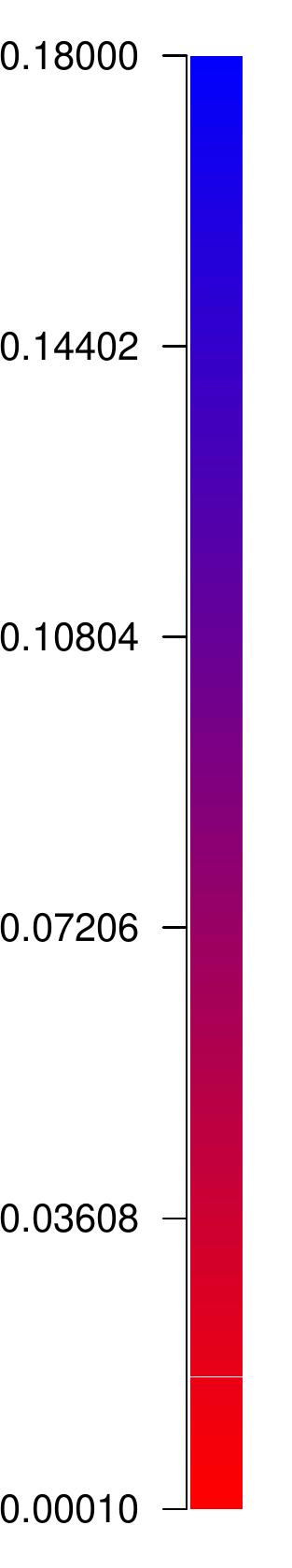}};
		
		\draw[-latex] (-7+\x,-8.8) -- (7.5+\x,-8.8);
		\draw (-6.7+\x,-8.6) -- (-6.7+\x,-8.8);
		\node at (-6.7+\x,-9.1) {0};
		
		\draw (-4.8+\x,-8.6) -- (-4.8+\x,-8.8);
		\node at (-4.8+\x,-9.1) {0.143};
		
		\draw (-2.9+\x,-8.6) -- (-2.9+\x,-8.8);
		\node at (-2.9+\x,-9.1) {0.286};
		
		\draw (-1.0+\x,-8.6) -- (-1.0+\x,-8.8);
		\node at (-1.0+\x,-9.1) {0.430};
		
		\draw (0.9+\x,-8.6) -- (0.9+\x,-8.8);
		\node at (0.9+\x,-9.1) {0.571};
		
		\draw (2.9+\x,-8.6) -- (2.9+\x,-8.8);
		\node at (2.9+\x,-9.1) {0.714};
		
		\draw (4.9+\x,-8.6) -- (4.9+\x,-8.8);
		\node at (4.9+\x,-9.1) {0.857};
		
		\draw (6.8+\x,-8.6) -- (6.8+\x,-8.8);
		\node at (6.8+\x,-9.1) {1};
		
		\node at (0+\x,-9.7) {Standardized score};
		
		\end{tikzpicture} 
	\end{minipage}
	\caption{Regression of $3 \times 3$ diffusion tensor on memory ($X_1$), executive functioning ($X_2$) and language ability ($X_3$), depicting the estimated additive component functions $w_k$ in model \eqref{eq:group-model}: Component function $w_1(X_1)$ for the  AD group (Row 1) and the CN group (Row 2); function  $w_2(X_2)$ for the  AD group (Row 3) and the CN group (Row 4); function  $w_3(X_3)$ for the  AD group (Row 5) and the CN group (Row 6).  The color encodes the level of fractional anisotropy.\label{fig:dti}}
\end{figure}

\section*{Appendix: Proofs}

\begin{proof}[Proof of Proposition \ref{prop:Lie-group}]
	First, since $\manifold$ is a Hadamard manifold, 
	the Riemannian logarithmic map $\Log_\mu$ and  the Lie logarithmic map $\lielog$ are well defined for all elements
	of $\manifold$. Moreover, for a bi-invariant Lie group, the Riemannian
	exponential map $\Exp_{e}$ at the identity element $e$ coincides
	with the Lie exponential map $\lieexp$.
	
	For $z\in\manifold$, let $\mathfrak{u}=\lielog(z)\in\liealgebra$
	and denote by $U$ the associated left-invariant vector field. Define $\gamma_{\mu}(t)=L_{\mu}(\lieexp(t\mathfrak{u})).$
	Then $\gamma_{\mu}^{\prime}(t)=U(\mu\groupop\gamma_{e}(t))$ based
	on the proof of Lemma 6 in \citet{Lin2019a}. The fact that $\gamma_{e}(0)=e$
	further leads to $\gamma_{\mu}^{\prime}(0)=U(\mu)=\tau_{e,\mu}\mathfrak{u}$,
	where the second equality is due to the fact that the parallel transport
	of $\mathfrak{u}$ is realized by the vector field $U$. Noting that
	$\gamma_{\mu}(0)=\mu$, by the definition of the Riemannian exponential
	map,  $\Exp_{\mu}\gamma_{\mu}^{\prime}(0)=\gamma_{\mu}(1)$,
	which leads to $\Exp_{\mu}\tau_{e,\mu}\mathfrak{u}=\ell_{\mu}(\lieexp(\mathfrak{u}))=\mu\groupop z$.
	The last equation is equivalent to $\Log_{\mu}(\mu\groupop z)=\tau_{e,\mu}\lielog(z)$.
	
	Applying the above with $z=w_{1}(X_{1})\groupop\cdots\groupop w_{q}(X_{q})\groupop\zeta$,
	we have $\Log_{\mu}Y=\tau_{e,\mu}\lielog(w_{1}(X_{1})\groupop\cdots\groupop w_{q}(X_{q})\groupop\zeta)=\sum_{k=1}^{q}\tau_{e,\mu}\lielog w_{k}(X_{k})+\tau_{e,\mu}\lielog\zeta$,
	where the second equality stems from $\lieexp(\mathfrak{u}+\mathfrak{v})=\lieexp(\mathfrak{u})\groupop\lieexp(\mathfrak{v})$
	for $\mathfrak{u},\mathfrak{v}\in\liealgebra$ and this  leads to $\lielog(u\groupop v)=\lielog(u)+\lielog(v)$
	for $u,v\in\manifold$.
\end{proof}

\begin{proof}[Proof of Proposition \ref{prop:mu-rate}]
	We utilize Corollary 1 of \citet{scho:19}. We first observe that
	$d(\hat{\mu},\mu)=\op(1)$ according to Theorem 2.3 of \citet{Bhattacharya2003}
	and condition \ref{cond:frechet-mean-Y}. Then according to \citet{scho:19}
	the growth and entropy conditions are required to hold only in a neighborhood
	of $\mu$, where the corresponding existence and growth conditions 
	are in assumptions  \ref{cond:frechet-mean-Y} and  \ref{cond:hessian},
	respectively.
	Condition  \ref{cond:frechet-mean-Y} implies that $F$ is finite
	for some point and thus by the triangle inequality for all points  in
	the manifold. If $Z$ is an independent copy of $Y$, then $\expect d^{2}(Y,Z)\leq2\expect\{d^{2}(Y,\mu)\}+2\expect\{d^{2}(\mu,Z)\}=4F(\mu)<\infty$,
	and the moment condition of \citet{scho:19} follows, as well as the 
	weak quadruple condition, where the latter holds for all 
	Hadamard spaces and bounded spaces. 
	Since a compact manifold is a bounded space,
	the weak quadruple condition holds for manifolds under 
	\ref{cond:M2} or  \ref{cond:M3}. 
	
	Finally we verify the entropy condition of \citet{scho:19}. If $\manifold$ is compact,
	then its sectional curvature is bounded away from $-\infty$ and $+\infty$.
	According to the Bishop--G\"unther inequality \citep[][Eq. (3.34)]{Gray2004},
	$\mathrm{vol}(B_\epsilon(z))\geq C_{1}\epsilon^{D}$ for all sufficiently
	small $\epsilon>0$, where $B_\epsilon(z)=\{y\in\manifold:d(y,z)<\epsilon\}$
	and $C_{1}$ is a constant independent of $z$ and $\epsilon$. With
	this result and equation (3.33) of \citet{Gray2004}, the packing
	number and thus the covering number of $B_{\delta}(\mu)$ is bounded
	by $O(\delta^{D}\epsilon^{-D})$. Therefore, the entropy condition
	holds for $\alpha=\beta$ for a sufficiently
	small neighborhood of $\mu$ and the result follows. 
\end{proof}

The following lemmas are instrumental to establish Theorems \ref{rate} and \ref{asympdist}.

\begin{lem}
\label{lem:max-stat}If $Z_{1},\ldots,Z_{n}$ are nonnegative i.i.d.
random variables with $\expect Z{}_{1}^{\alpha}<\infty$ for some
$\alpha>0$, then $\max_{1\leq i\leq n}Z_{i}=\Op(n^{1/\alpha}).$
\end{lem}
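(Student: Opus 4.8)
The plan is to verify the definition of $\Op$ directly, using nothing more than a union bound together with Markov's inequality applied to the transformed variables $Z_i^{\alpha}$. Recall that the conclusion $\max_{1\le i\le n} Z_i = \Op(n^{1/\alpha})$ amounts to the statement: for every $\eta>0$ there is a constant $M>0$, independent of $n$, such that $\prob\big(\max_{1\le i\le n} Z_i > M n^{1/\alpha}\big) < \eta$ for all $n$.

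First I would control the tail of the maximum by a union bound over the $n$ identically distributed coordinates, which gives $\prob\big(\max_{1\le i\le n} Z_i > t\big) \le n\,\prob(Z_1 > t)$ for every $t>0$. Next, since $Z_1\ge 0$ and the map $x\mapsto x^{\alpha}$ is strictly increasing on $[0,\infty)$, I would rewrite $\prob(Z_1 > t) = \prob(Z_1^{\alpha} > t^{\alpha})$ and apply Markov's inequality to obtain $\prob(Z_1 > t)\le t^{-\alpha}\,\expect Z_1^{\alpha}$, where $\expect Z_1^{\alpha}<\infty$ by hypothesis. Taking $t = M n^{1/\alpha}$ and combining the two bounds yields $\prob\big(\max_{1\le i\le n} Z_i > M n^{1/\alpha}\big) \le M^{-\alpha}\,\expect Z_1^{\alpha}$, a quantity that does not depend on $n$; choosing $M$ large enough that $M^{-\alpha}\,\expect Z_1^{\alpha} < \eta$ completes the argument.

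There is no real obstacle here. The only step that warrants a brief check is the identity $\prob(Z_1 > t) = \prob(Z_1^{\alpha} > t^{\alpha})$, which rests on the monotonicity of the power map and hence on $Z_1 \ge 0$ and $\alpha > 0$ — both of which are part of the hypotheses — so the reduction to Markov's inequality is legitimate.
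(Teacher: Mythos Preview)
Your proof is correct. The combination of the union bound and Markov's inequality yields
\[
\prob\Big(\max_{1\le i\le n} Z_i > M n^{1/\alpha}\Big)\le n\cdot (M n^{1/\alpha})^{-\alpha}\,\expect Z_1^{\alpha}
= M^{-\alpha}\,\expect Z_1^{\alpha},
\]
which is uniform in $n$ and can be made smaller than any prescribed $\eta$ by choosing $M$ large. This is exactly the $\Op$ statement.

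The paper takes a slightly different route: it exploits the full i.i.d.\ structure to write
$\prob(\max_i Z_i\le C a_n)=(1-\prob(Z_1>Ca_n))^n$, bounds $\prob(Z_1>Ca_n)$ by Markov as you do, and then passes to the limit $(1-\expect Z_1^\alpha/(C^\alpha n))^n\to e^{-\expect Z_1^\alpha/C^\alpha}$. Your argument is more elementary in two respects: it avoids the limiting step entirely (the bound holds for every $n$, not just asymptotically), and it uses only the identical-distribution part of the hypothesis---independence is never invoked. The paper's product computation gives a marginally sharper asymptotic tail bound, but for the purpose of establishing $\Op(n^{1/\alpha})$ the two approaches are equivalent, and yours is the cleaner one.
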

\begin{proof}[Proof of Lemma \ref{lem:max-stat}]
Let $a_{n}=n^{1/\alpha}$. By i.i.d. assumption, for $\epsilon>0$,
\begin{align*}
\prob\left\{ \max_{1\leq i\leq n}Z_{i}\leq Ca_{n}\right\}  & =\left(\prob\{Z_{1}\leq Ca_{n}\right)^{n}=\left(1-\prob\{Z_{1}>Ca_{n}\}\right)^{n}\\
 & \geq\left(1-\frac{\expect Z_{1}^{\alpha}}{C^{\alpha}a_{n}^{\alpha}}\right)^{n}=\left(1-\frac{\expect Z_{1}^{\alpha}}{C^{\alpha}n}\right)^{n}\rightarrow e^{-\expect Z_{1}^{\alpha}/C^{\alpha}}\\
 & \geq1-\epsilon
\end{align*}
for a sufficiently large $C$ that depends on $\epsilon>0$.
\end{proof}

\begin{lem}\label{error-LogY}
Assume the conditions \ref{cond:frechet-mean-Y}--\ref{cond:hessian-d2} and \ref{cond:smoothness-f}. If {$\expect\|\varepsilon\|_{\mu}^{\alpha}<\infty$
for some $\alpha>2$}, then
\[
\max_{1\leq i\leq n}\|\tau_{\hat{\mu},\mu}\Log_{\hat{\mu}}Y_{i}-\Log_{\mu}Y_{i}\|_{\mu}=\Op(n^{-(\alpha-2)/(2\alpha)}).
\]
\end{lem}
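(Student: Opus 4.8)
The plan is to control, for each fixed $i$, the discrepancy $\|\tau_{\hat\mu,\mu}\Log_{\hat\mu}Y_i-\Log_\mu Y_i\|_\mu$ by integrating the covariant derivative of the vector field $z\mapsto\Log_z Y_i$ along the minimizing geodesic from $\mu$ to $\hat\mu$, and then to pass to the maximum over $i$ using the parametric rate $d(\hat\mu,\mu)=\Op(n^{-1/2})$ from Proposition~\ref{prop:mu-rate} together with the maximal inequality of Lemma~\ref{lem:max-stat}.

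First I would collect the preliminary facts. Since $\expect\|\varepsilon\|_\mu^\alpha<\infty$ with $\alpha>2$ and, by \ref{cond:smoothness-f} and compactness of the domains $\mathcal{X}_k$, $C_f:=\sup_{1\le k\le q}\sup_{x\in\mathcal{X}_k}\|f_k(x)\|_\mu<\infty$, the model \eqref{eq:general-model} gives $d(\mu,Y_i)=\|\Log_\mu Y_i\|_\mu\le q\,C_f+\|\varepsilon_i\|_\mu$; in particular $\expect d^2(\mu,Y)<\infty$, so $Y$ is of the second order and Proposition~\ref{prop:mu-rate} applies, yielding $d(\hat\mu,\mu)=\Op(n^{-1/2})$, while Lemma~\ref{lem:max-stat} applied to $Z_i=\|\varepsilon_i\|_\mu$ gives $\max_{1\le i\le n}d(\mu,Y_i)=\Op(n^{1/\alpha})$. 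On the event $\{d(\hat\mu,\mu)<\rho\}$, which has probability tending to one for a suitable $\rho>0$ smaller than both the injectivity radius at $\mu$ (positive in the compact case \ref{cond:M3}, infinite in the Hadamard case \ref{cond:M2}) and $\epsilon_0/2$, there is a unique minimizing geodesic $\gamma:[0,1]\to\manifold$ with $\gamma(0)=\mu$, $\gamma(1)=\hat\mu$, and by \ref{cond:cut-locus} together with the $1$-Lipschitz dependence of the distance to the cut locus on the base point, $Y_i\in\mathcal{E}_{\gamma(t)}$ for every $t\in[0,1]$ and every $i$ (automatic under \ref{cond:M2}); hence $V_i(t):=\Log_{\gamma(t)}Y_i$ is a well-defined smooth vector field along $\gamma$.

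Next I would carry out the core estimate. Trivializing the tangent bundle along $\gamma$ by parallel transport, the curve $t\mapsto\tau_{\gamma(t),\mu}V_i(t)$ lies in the fixed Hilbert space $T_\mu\manifold$ and has ordinary derivative $\tau_{\gamma(t),\mu}\big(\nabla_{\gamma'(t)}V_i(t)\big)$; since $\nabla_{\gamma'(t)}\Log_{\gamma(t)}Y_i=H_{Y_i,\gamma(t)}\gamma'(t)$ by the definition of $H_{y,z}$ in \ref{cond:hessian-d2} and parallel transport is an isometry, the fundamental theorem of calculus gives
\[
\|\tau_{\hat\mu,\mu}\Log_{\hat\mu}Y_i-\Log_\mu Y_i\|_\mu\le\int_0^1\|H_{Y_i,\gamma(t)}\gamma'(t)\|_{\gamma(t)}\,\diffop t\le\mathfrak{c}_3\,d(\hat\mu,\mu)\int_0^1\{1+d(\gamma(t),Y_i)\}\,\diffop t,
\]
where I used $\|\gamma'(t)\|_{\gamma(t)}=d(\mu,\hat\mu)$ and the operator-norm bound of \ref{cond:hessian-d2}. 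Bounding $d(\gamma(t),Y_i)\le d(\gamma(t),\mu)+d(\mu,Y_i)\le d(\hat\mu,\mu)+d(\mu,Y_i)$ and taking the maximum over $i$ yields
\[
\max_{1\le i\le n}\|\tau_{\hat\mu,\mu}\Log_{\hat\mu}Y_i-\Log_\mu Y_i\|_\mu\le\mathfrak{c}_3\,d(\hat\mu,\mu)\Big\{1+d(\hat\mu,\mu)+\max_{1\le i\le n}d(\mu,Y_i)\Big\}=\Op(n^{-1/2})\cdot\Op(n^{1/\alpha}),
\]
which equals $\Op(n^{-(\alpha-2)/(2\alpha)})$ since $-1/2+1/\alpha=-(\alpha-2)/(2\alpha)$, completing the proof.

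The hard part will be the geometric step of establishing that $V_i(t)=\Log_{\gamma(t)}Y_i$ is well-defined and smooth along the entire geodesic — which in the compact case \ref{cond:M3} relies on \ref{cond:cut-locus} and the Lipschitz dependence of the cut-locus distance on the base point — and the identification of the ordinary derivative of the parallel-transported field with the Hessian operator $H_{Y_i,\gamma(t)}$ of the squared distance. After that, the conclusion is a routine combination of the parametric rate for $\hat\mu$ and the maximal inequality in Lemma~\ref{lem:max-stat}.
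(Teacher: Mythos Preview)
Your proof is correct and follows essentially the same route as the paper: both arguments combine the parametric rate $d(\hat\mu,\mu)=\Op(n^{-1/2})$ from Proposition~\ref{prop:mu-rate}, the maximal inequality of Lemma~\ref{lem:max-stat} giving $\max_i d(\mu,Y_i)=\Op(n^{1/\alpha})$, and the Hessian bound \ref{cond:hessian-d2} to control $\|\tau_{\hat\mu,\mu}\Log_{\hat\mu}Y_i-\Log_\mu Y_i\|_\mu$ by $O(1)\,d(\hat\mu,\mu)\{1+d(\mu,Y_i)\}$. The only difference is that the paper invokes inequality~(5.7) of \citet{Kendall2011} as a black box for this last step, whereas you derive it explicitly by integrating the covariant derivative of $t\mapsto\Log_{\gamma(t)}Y_i$ along the geodesic and are more careful about the cut-locus issues under \ref{cond:M3}; your version is thus more self-contained but otherwise identical in structure.
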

\begin{proof}[Proof of Lemma \ref{error-LogY}]
Using the inequality (5.7) of \cite{Kendall2011}, the condition \ref{cond:hessian-d2} and $d(\hat\mu,\mu)=\op(1)$ that is guaranteed by Proposition \ref{prop:mu-rate},
we deduce that,  with probability tending to one, 
\begin{align*}
\max_{1\leq i\leq n}\|\tau_{\hat{\mu},\mu}\Log_{\hat{\mu}}Y_{i}-\Log_{\mu}Y_{i}\|_{\mu} & =O(1)d(\hat{\mu},\mu)\max_{1\leq i\leq n}\|\Log_{\mu}Y_{i}\|_{\mu}.
\end{align*}
By Lemma \ref{lem:max-stat}, the moment condition $\expect\|\varepsilon\|_{\mu}^{\alpha}<\infty$, the compactness of $\mathcal X$ and the continuity of $f_1,\ldots,f_q$ assumed in \ref{cond:smoothness-f}, 
we have $\max_{1\leq i\leq n}\|\Log_{\mu}Y_{i}\|_{\mu}=\Op(n^{1/\alpha})$.
The conclusion of the lemma then follows from Proposition \ref{prop:mu-rate}.
\end{proof}

\begin{proof}[Proof of Theorem \ref{rate}] We sketch the proof.
Define $\tilde m_j$ as $\hat m_j$ in (\ref{margest}) with $\Log_{\hat\mu} Y_i$ being
replaced by $\Log_\mu Y_i$. Let $(\tilde f_j: 1 \le j \le q)$ denote the solution of the system of equations
\[
\tilde{f}_{k}(x_{k})=\tilde{m}_{k}(x_{k})-n^{-1}\sum_{i=1}^n \Log_{\mu} Y_i
-\sum_{j:j\neq k}\int_{\mathcal{X}_{j}}\tilde{f}_{j}(x_{j})\frac{\hat{p}_{kj}(x_{k},x_{j})}{\hat{p}_{k}(x_{k})}\diffop x_{j},\quad 1\leq k\leq q,
\]
subject to the constraints $\int_{\mathcal X_k}\tilde f_k(x_k)\hat p_k(x_k)\diffop x_{k}=0$ for $1 \le k \le q$.
According to the theory of \cite{Jeon2020}, under the conditions of the theorem, the solution exists and is unique with probability
tending to one. Furthermore, it holds that
\begin{eqnarray}\label{rate-mu-based}\begin{split}
\max_{1\leq k\leq q}\int_{\mathcal{I}_{k}}\|\tilde{f}_{k}(x_{k})-f_{k}(x_{k})\|_{\mu}^{2}\,
p_{k}(x_{k})\diffop x_{k}&=\Op(n^{-4/5}),\\
\max_{1\leq k\leq q}\int_{\mathcal{X}_{k}}\|\tilde{f}_{k}(x_{k})-f_{k}(x_{k})\|_{\mu}^{2}\,
p_{k}(x_{k})\diffop x_{k}&=\Op(n^{-3/5}).
\end{split}
\end{eqnarray}
Since the smooth backfitting operation at (\ref{sbfeqn}) is linear in response variables and the parallel transport
$\tau_{\hat\mu,\mu}: T_{\hat\mu}(\manifold) \to T_\mu(\manifold)$ is also a linear map, we get that
$(\tau_{\hat\mu,\mu} \hat f_j: 1 \le j \le q)$ is nothing else than the smooth backfitting estimator that one gets from the smooth
backfitting operation with $\tau_{\hat\mu,\mu}\Log_{\hat\mu}Y_i$ as responses. We claim
\begin{equation}\label{rate-pf1}
\max_{1\leq k\leq q}\int_{\mathcal{X}_{k}}\|\tau_{\hat{\mu},\mu}\hat{f}_{k}(x_{k})-\tilde f_{k}(x_{k})\|_{\mu}^{2}\,
p_{k}(x_{k})\diffop x_{k}=\Op(n^{-(\alpha-2)/\alpha}).
\end{equation}
The results (\ref{rate-mu-based}) and (\ref{rate-pf1}) give the theorem.

To prove the claim (\ref{rate-pf1}), let $\delta_k=\tau_{\hat{\mu},\mu}\hat{f}_{k}-\tilde f_{k}$. Then,
$(\delta_j: 1 \le j \le q)$ is the solution of the system of equations
\begin{eqnarray*}\begin{split}
\delta_{k}(x_{k})&=\tau_{\hat\mu,\mu}\hat{m}_{k}(x_{k})-\tilde m_k(x_k) - n^{-1}\sum_{i=1}^n \left(\tau_{\hat\mu,\mu}\Log_{\hat\mu}Y_i
-\Log_{\mu} Y_i\right)\\
&\hspace{4cm} - \sum_{j:j\neq k}\int_{\mathcal{X}_{j}}\delta_{j}(x_{j})\frac{\hat{p}_{kj}(x_{k},x_{j})}{\hat{p}_{k}(x_{k})}\diffop x_{j},\quad 1\leq k\leq q,
\end{split}
\end{eqnarray*}
subject to the constraints $\int_{\mathcal X_k}\delta_k(x_k)\hat p_k(x_k)\diffop x_{k}=0$ for $1 \le k \le q$.
From Lemma~\ref{error-LogY} it follows that
\[
\Big\|\tau_{\hat\mu,\mu}\hat{m}_{k}(x_{k})-\tilde m_k(x_k) - n^{-1}\sum_{i=1}^n \left(\tau_{\hat\mu,\mu}\Log_{\hat\mu}Y_i
-\Log_{\mu} Y_i\right)\Big\|_\mu^2 = \Op(n^{-(\alpha-2)/\alpha})
\]
uniformly for $x_k \in \mathcal{X}_k$ for all $1 \le k \le q$. Using the arguments in the proof of Theorem~4.1 in
\cite{Jeon2020}, we may then prove
\begin{equation}\label{rate-pf2}
\sup_{x_k \in \mathcal{X}_k} \|\delta_k(x_k)\|_\mu^2 = \Op(n^{-(\alpha-2)/\alpha}).
\end{equation}
This gives (\ref{rate-pf1}). 
\end{proof}

\begin{proof}[Proof of Theorem \ref{asympdist}] 
Let $\tilde m_j^A(x_j) = n^{-1}\hat p_j(x_j)^{-1}\sum_{i=1}^n K_{h_j}(x_j,X_{ij}) \varepsilon_i$.
Then we find
\begin{equation}\label{asymp-pf}
n^{2/5}\left(\tilde f_j(x_j) - f_j(x_j)\right) = n^{2/5} \tilde m_j^A(x_j) + \tau_{e,\mu}\left(\frac{1}{2}\alpha_j^2 u_2 \cdot \psi_j''(x_j)
+\Delta_j(x_j)\right) + \op(1)
\end{equation}
for each $x_j \in \mathrm{Int(\mathcal{X})}$ for all $1 \le j \le q$, where $u_2=\int u^2 K(u)\diffop u$ and $\tilde{f}_j$ is defined in the proof of Theorem \ref{rate}. Here, $W_n =\op(1)$ means
$\lim_{n \to \infty} P(\|W\|_\mu > \epsilon) =0$ for all $\epsilon>0$. The assertion (\ref{asymp-pf}) can be proved along the lines of the proof of Theorem~4.3
in \cite{Jeon2020}. The expansion (\ref{asymp-pf}) together with (\ref{rate-pf2}) entails
\[
n^{2/5}\left(\tau_{\hat{\mu},\mu} \hat f_j(x_j) - f_j(x_j)\right) = n^{-3/5} \hat p_j(x_j)^{-1} \sum_{i=1}^n K_{h_j}(x_j,X_{ij})
\varepsilon_i + \theta_j(x_j) +\op(1), \quad 1 \le j \le q.
\]
Here, we have used $\alpha>10$. 
By identifying $T_\mu\manifold$ and its metric $g_\mu$ with the Hilbert space $\mathbb{H}$ and the associated inner product
$\langle \cdot, \cdot \rangle$ in \cite{Jeon2020}, respectively, and utilizing Theorem~1.1 in \cite{Kundu2000}, we may prove
that the joint distribution of $\big(n^{-3/5} \hat p_j(x_j)^{-1} \sum_{i=1}^n K_{h_j}(x_j,X_{ij})
\varepsilon_i: 1 \le j \le q\big)$ converges to $N(\mathbf{x})$. This completes the proof of the theorem. 
\end{proof}

\subsection*{Acknowledgments} 
Data used in preparation of this article were obtained from the Alzheimer's Disease Neuroimaging Initiative (ADNI) database (\url{adni.loni.usc.edu}). As such, the investigators within the ADNI contributed to the design and implementation of ADNI and/or provided data but did not participate in analysis or writing of this report. A complete listing of ADNI investigators can be found at: \url{http://adni.loni.usc.edu/wp-content/uploads/how_to_apply/ADNI_Acknowledgement_List.pdf}.

\bibliographystyle{apalike}
\bibliography{mam}

\begin{thebibliography}{}

\bibitem[Afsari, 2011]{afsa:11}
Afsari, B. (2011).
\newblock Riemannian ${L}^p$ center of mass: {{Existence}}, uniqueness, and
  convexity.
\newblock {\em Proceedings of the American Mathematical Society},
  139(2):655--673.

\bibitem[Arnaudon et~al., 2013]{Arnaudon2013}
Arnaudon, M., Barbaresco, F., and Yang, L. (2013).
\newblock Riemannian medians and means with applications to radar signal
  processing.
\newblock {\em IEEE Journal of Selected Topics in Signal Processing},
  7(4):595--604.

\bibitem[Arsigny et~al., 2006]{arsigny2006}
Arsigny, V., Fillard, P., Pennec, X., and Ayache, N. (2006).
\newblock Log-{E}uclidean metrics for fast and simple calculus on diffusion
  tensors.
\newblock {\em Magnetic Resonance in Medicine}, 56(2):411--421.

\bibitem[Arsigny et~al., 2007]{Arsigny2007}
Arsigny, V., Fillard, P., Pennec, X., and Ayache, N. (2007).
\newblock Geometric means in a novel vector space structure on symmetric
  positive-definite matrices.
\newblock {\em SIAM Journal of Matrix Analysis and Applications},
  29(1):328--347.

\bibitem[Barmpoutis et~al., 2007]{Barmpoutis2007}
Barmpoutis, A., Vemuri, B.~C., Shepherd, T.~M., and Forder, J.~R. (2007).
\newblock Tensor splines for interpolation and approximation of {DT}-{MRI} with
  applications to segmentation of isolated rat hippocampi.
\newblock {\em IEEE transactions on medical imaging}, 26(11):1537--1546.

\bibitem[Bhattacharya and Patrangenaru, 2003]{Bhattacharya2003}
Bhattacharya, R. and Patrangenaru, V. (2003).
\newblock Large sample theory of intrinsic and extrinsic sample means on
  manifolds. {I}.
\newblock {\em The Annals of Statistics}, 31(1):1--29.

\bibitem[Caseiro et~al., 2012]{Caseiro2012}
Caseiro, R., Henriques, J.~F., Martins, P., and Batista, J. (2012).
\newblock A nonparametric {R}iemannian framework on tensor field with
  application to foreground segmentation.
\newblock {\em Pattern Recognition}, 45(11):3997--4017.

\bibitem[Chau and von Sachs, 2019]{Chau2019a}
Chau, J. and von Sachs, R. (2019).
\newblock Intrinsic wavelet regression for surfaces of {H}ermitian positive
  definite matrices.
\newblock {\em arXiv:1808.08764 [stat]}.
\newblock arXiv: 1808.08764.

\bibitem[Cornea et~al., 2017]{Cornea2017}
Cornea, E., Zhu, H., Kim, P., and Ibrahim, J.~G. (2017).
\newblock Regression models on {R}iemannian symmetric spaces.
\newblock {\em Journal of the Royal Statistical Society: Series B (Statistical
  Methodology)}, 79(2):463--482.

\bibitem[Dai and M{\"u}ller, 2018]{dai:17:1}
Dai, X. and M{\"u}ller, H.-G. (2018).
\newblock Principal component analysis for functional data on {R}iemannian
  manifolds and spheres.
\newblock {\em The Annals of Statistics}, 46:3334--3361.

\bibitem[Davis et~al., 2010]{Davis2010}
Davis, B.~C., Fletcher, P.~T., Bullitt, E., and Joshi, S. (2010).
\newblock Population shape regression from random design data.
\newblock {\em International Journal of Computer Vision}, 90(2):255--266.

\bibitem[Dryden et~al., 2009]{Dryden2009}
Dryden, I.~L., Koloydenko, A., and Zhou, D. (2009).
\newblock Non-{E}uclidean statistics for covariance matrices, with applications
  to diffusion tensor imaging.
\newblock {\em The Annals of Applied Statistics}, 3(3):1102--1123.

\bibitem[Duff and Brown-Schmidt, 2012]{Duff2012}
Duff, M.~C. and Brown-Schmidt, S. (2012).
\newblock The hippocampus and the flexible use and processing of language.
\newblock {\em Frontiers in Human Neuroscience}, 6:9.

\bibitem[Fillard et~al., 2005]{Fillard2005}
Fillard, P., Arsigny, V., Ayache, N., and Pennec, X. (2005).
\newblock A {R}iemannian framework for the processing of tensor-valued images.
\newblock In {\em International Workshop on Deep Structure, Singularities, and
  Computer Vision}, pages 112--123.

\bibitem[Fillard et~al., 2007]{Fillard2007}
Fillard, P., Arsigny, V., Pennec, X., M.Hayashi, K., M.Thompson, P., and
  Ayache, N. (2007).
\newblock Measuring brain variability by extrapolating sparse tensor fields
  measured on sulcal lines.
\newblock {\em NeuroImage}, 34(2):639--650.

\bibitem[Fletcher, 2013]{Fletcher2013}
Fletcher, P.~T. (2013).
\newblock Geodesic regression and the theory of least squares on {R}iemannian
  manifolds.
\newblock {\em International Journal of Computer Vision}, 105(2):171--185.

\bibitem[Fletcher and Joshi, 2007]{Fletcher2007}
Fletcher, T. and Joshi, S. (2007).
\newblock Riemannian geometry for the statistical analysis of diffusion tensor
  data.
\newblock {\em Signal Processing}, 87:250--262.

\bibitem[Friston, 2011]{Friston2011}
Friston, K.~J. (2011).
\newblock Functional and effective connectivity: a review.
\newblock {\em Brain Connectivity}, 1(1):13--36.

\bibitem[Gibbons et~al., 2012]{Gibbons2012}
Gibbons, L.~E., Carle, A.~C., Mackin, R.~S., Harvey, D., Mukherjee, S., Insel,
  P., Curtis, S.~M., Mungas, D., and Crane, P.~K. (2012).
\newblock A composite score for executive functioning, validated in
  {Alzheimer's Disease Neuroimaging Initiative (ADNI)} participants with
  baseline mild cognitive impairment.
\newblock {\em Brain Imaging and Behavior}, 6(4):517--527.

\bibitem[Gray, 2004]{Gray2004}
Gray, A. (2004).
\newblock {\em Tubes}.
\newblock Springer Basel AG, second edition.

\bibitem[Han et~al., 2020]{Han2020}
Han, K., M{\"u}ller, H.-G., and Park, B.~U. (2020).
\newblock Additive functional regression for densities as responses.
\newblock {\em Journal of the American Statistical Association},
  115(530):997--1010.

\bibitem[Han and Park, 2018]{Han2018}
Han, K. and Park, B.~U. (2018).
\newblock Smooth backfitting for errors-in-variables additive models.
\newblock {\em The Annals of Statistics}, 46(5):216--2250.

\bibitem[Hein, 2009]{Hein2009}
Hein, M. (2009).
\newblock Robust nonparametric regression with metric-space valued output.
\newblock In {\em Advances in Neural Information Processing Systems}, pages
  718--726.

\bibitem[Hinkle et~al., 2014]{Hinkle2014}
Hinkle, J., Fletcher, P.~T., and Joshi, S. (2014).
\newblock Intrinsic polynomials for regression on {R}iemannian manifolds.
\newblock {\em Journal of Mathematical Imaging and Vision}, 50(1-2):32--52.

\bibitem[Hua et~al., 2017]{Hua2017}
Hua, X., Cheng, Y., Wang, H., Qin, Y., Li, Y., and Zhang, W. (2017).
\newblock Matrix {CFAR} detectors based on symmetrized {K}ullback-{L}eibler and
  total {K}ullback-{L}eibler divergences.
\newblock {\em Digital Signal Processing}, 69(C):106--116.

\bibitem[Huettel et~al., 2008]{Huettel2008}
Huettel, S.~A., Song, A.~W., and McCarthy, G. (2008).
\newblock {\em Functional Magnetic Resonance Imaging}.
\newblock Sinauer Associates, 2nd edition.

\bibitem[Jeon and Park, 2020]{Jeon2020}
Jeon, J.~M. and Park, B.~U. (2020).
\newblock Additive regression with {H}ilbertian responses.
\newblock {\em The Annals of Statistics}, page to appear.

\bibitem[Jung et~al., 2015]{Jung2015}
Jung, S., Schwartzman, A., and Groisser, D. (2015).
\newblock Scaling-rotation distance and interpolation of symmetric
  positive-definite matrices.
\newblock {\em SIAM Journal on Matrix Analysis and Applications},
  36(3):1180--1201.

\bibitem[Kendall and Le, 2011]{Kendall2011}
Kendall, W.~S. and Le, H. (2011).
\newblock Limit theorems for empirical {Fr\'{e}chet} means of independent and
  non-identically distributed manifold-valued random variables.
\newblock {\em Brazilian Journal of Probability and Statistics},
  25(3):323--352.

\bibitem[Kundu et~al., 2000]{Kundu2000}
Kundu, S., Majumdar, S., and Mukherjee, K. (2000).
\newblock Central limit theorems revisited.
\newblock {\em Statistics and Probability Letters}, 47(3):265--275.

\bibitem[Lang, 1999]{Lang1999}
Lang, S. (1999).
\newblock {\em Fundamentals of Differential Geometry}.
\newblock Springer, New York.

\bibitem[Le~Bihan, 1991]{LeBihan1991}
Le~Bihan, D. (1991).
\newblock Molecular diffusion nuclear magnetic resonance imaging.
\newblock {\em Magnetic Resonance Quarterly}, 7(1):1--30.

\bibitem[Lee et~al., 2010]{Lee2010}
Lee, Y.~K., Mammen, E., and Park, B.~U. (2010).
\newblock Backfitting and smooth backfitting for additive quantile models.
\newblock {\em The Annals of Statistics}, 38(5):2857--2883.

\bibitem[Lee et~al., 2012]{Lee2012}
Lee, Y.~K., Mammen, E., and Park, B.~U. (2012).
\newblock Flexible generalized varying coefficient regression models.
\newblock {\em The Annals of Statistics}, 40(3):1906--1933.

\bibitem[Lenglet et~al., 2006]{Lenglet2006}
Lenglet, C., Rousson, M., Deriche, R., and Faugeras, O. (2006).
\newblock Statistics on the manifold of multivariate normal distributions:
  Theory and application to diffusion tensor {MRI} processing.
\newblock {\em Journal of Mathematical Imaging and Vision}, 25(3):423--444.

\bibitem[Lin, 2019]{Lin2019a}
Lin, Z. (2019).
\newblock {R}iemannian geometry of symmetric positive definite matrices via
  {C}holesky decomposition.
\newblock {\em SIAM Journal on Matrix Analysis and Applications},
  40(4):1353--1370.

\bibitem[Lin and M\"uller, 2019]{Lin2019b}
Lin, Z. and M\"uller, H.-G. (2019).
\newblock Total variation regularized {F}r\'echet regression for metric-space
  valued data.
\newblock {\em arxiv}.

\bibitem[Lindberg et~al., 2012]{Lindberg2012}
Lindberg, O., Walterfang, M., Looi, J.~C., Malykhin, N., \"Ostberg, P.,
  Zandbelt, B., Styner, M., Velakoulis, D., \"Orndahl, E., Cavallin, L., and
  Wahlund, L.-O. (2012).
\newblock Shape analysis of the hippocampus in alzheimer's disease and subtypes
  of frontotemporal lobar degeneration.
\newblock {\em Journal of Alzheimer's Disease}, 30(2):355--365.

\bibitem[Mammen et~al., 1999]{Mammen1999}
Mammen, E., Linton, O., and Nielsen, J. (1999).
\newblock The existence and asymptotic properties of a backfitting projection
  algorithm under weak conditions.
\newblock {\em The Annals of Statistics}, 27(5):1443--1490.

\bibitem[Moakher, 2005]{Moakher2005}
Moakher, M. (2005).
\newblock A differential geometry approach to the geometric mean of symmetric
  positive-definite matrices.
\newblock {\em SIAM Journal on Matrix Analysis and Applications},
  26(3):735--747.

\bibitem[M{\"u}ller et~al., 2005]{muller2005}
M{\"u}ller, M.~J., Greverus, D., Dellani, P.~R., Weibrich, C., Wille, P.~R.,
  Scheurich, A., Stoeter, P., and Fellgiebel, A. (2005).
\newblock Functional implications of hippocampal volume and diffusivity in mild
  cognitive impairment.
\newblock {\em Neuroimage}, 28(4):1033--1042.

\bibitem[Park et~al., 2018]{Park2018}
Park, B.~U., Chen, C.-J., Tao, W., and M\"uller, H.-G. (2018).
\newblock Singular additive models for function to function regression.
\newblock {\em Statistica Sinica}, 28:2497--2520.

\bibitem[Pelletier, 2006]{Pelletier2006}
Pelletier, B. (2006).
\newblock Non-parametric regression estimation on closed {R}iemannian
  manifolds.
\newblock {\em Journal of Nonparametric Statistics}, 18(1):57--67.

\bibitem[Pennec, 2006]{Pennec2006}
Pennec, X. (2006).
\newblock Intrinsic statistics on {R}iemannian manifolds: Basic tools for
  geometric measurements,.
\newblock {\em Journal of Mathematical Imaging and Vision}, 25:127--154.

\bibitem[Pennec, 2020]{Pennec2020}
Pennec, X. (2020).
\newblock Manifold-valued image processing with {SPD} matrices.
\newblock In {\em Riemannian Geometric Statistics in Medical Image Analysis},
  pages 75--134. Elsevier.

\bibitem[Pennec et~al., 2006]{penn:06}
Pennec, X., Fillard, P., and Ayache, N. (2006).
\newblock A {R}iemannian framework for tensor computing.
\newblock {\em International Journal of Computer Vision}, 66(1):41--66.

\bibitem[Petersen et~al., 2019]{pete:19}
Petersen, A., Deoni, S., and M\"uller, H.-G. (2019).
\newblock Fr\'echet estimation of time-varying covariance matrices from sparse
  data, with application to the regional co-evolution of myelination in the
  developing brain.
\newblock {\em The Annals of Applied Statistics}, 13(1):393--419.

\bibitem[Petersen and M{\"u}ller, 2016]{pete:16}
Petersen, A. and M{\"u}ller, H.-G. (2016).
\newblock Functional data analysis for density functions by transformation to a
  {{Hilbert}} space.
\newblock {\em The Annals of Statistics}, 44(1):183--218.

\bibitem[Petersen and M\"{u}ller, 2019]{Petersen2019}
Petersen, A. and M\"{u}ller, H.-G. (2019).
\newblock Fr\'{e}chet regression for random objects with {E}uclidean
  predictors.
\newblock {\em The Annals of Statistics}, 47(2):691--719.

\bibitem[Pigoli et~al., 2014]{pigo:14}
Pigoli, D., Aston, J.~A., Dryden, I.~L., and Secchi, P. (2014).
\newblock Distances and inference for covariance operators.
\newblock {\em Biometrika}, 101:409--422.

\bibitem[Rathi et~al., 2007]{Rathi2007}
Rathi, Y., Tannenbaum, A., and Michailovich, O. (2007).
\newblock Segmenting images on the tensor manifold.
\newblock In {\em Proocedings of Computer Vision and Pattern Recognition}.

\bibitem[Scheipl et~al., 2015]{Scheipl2015}
Scheipl, F., Staicu, A.-M., and Greven, S. (2015).
\newblock Functional additive mixed models.
\newblock {\em Journal of Computational and Graphical Statistics},
  24(2):477--501.

\bibitem[Sch{\"o}tz, 2019]{scho:19}
Sch{\"o}tz, C. (2019).
\newblock Convergence rates for the generalized {F}r\'echet mean via the
  quadruple inequality.
\newblock {\em Electronic Journal of Statistics}, 13:4280--4345.

\bibitem[Shi et~al., 2009]{Shi2009}
Shi, X., Styner, M., Lieberman, J., Ibrahim, J.~G., Lin, W., and Zhu, H.
  (2009).
\newblock Intrinsic regression models for manifold-valued data.
\newblock In {\em Medical Image Computing and Computer-Assisted Intervention -
  MICCAI}, volume~12, pages 192--199.

\bibitem[Steinke et~al., 2010]{Steinke2010a}
Steinke, F., Hein, M., and Sch{\"{o}}lkopf, B. (2010).
\newblock Nonparametric regression between general {R}iemannian manifolds.
\newblock {\em SIAM Journal on Imaging Sciences}, 3(3):527--563.

\bibitem[Stone, 1985]{ston:85:1}
Stone, C.~J. (1985).
\newblock Additive regression and other nonparametric models.
\newblock {\em The Annals of Statistics}, 13:689--705.

\bibitem[Sturm, 2003]{Sturm2003}
Sturm, K.-T. (2003).
\newblock Probability measures on metric spaces of nonpositive curvature.
\newblock In {\em Heat kernels and analysis on manifolds, graphs, and metric
  spaces (Paris, 2002), vol. 338 of Contemporary Mathematics}, pages 357--390.
  American Mathematical Society, Providence, RI.

\bibitem[Yu et~al., 2008]{Yu2008}
Yu, K., Park, B.~U., and Mammen, E. (2008).
\newblock Smooth backfitting in generalized additive models.
\newblock {\em The Annals of Statistics}, 36(1):228--260.

\bibitem[Yuan et~al., 2012]{Yuan2012}
Yuan, Y., Zhu, H., Lin, W., and Marron, J.~S. (2012).
\newblock Local polynomial regression for symmetric positive definite matrices.
\newblock {\em Journal of Royal Statistical Society: Series B (Statistical
  Methodology)}, 74(4):697--719.

\bibitem[Zhou et~al., 2016]{zhou:16}
Zhou, D., Dryden, I.~L., Koloydenko, A.~A., Audenaert, K.~M., and Bai, L.
  (2016).
\newblock Regularisation, interpolation and visualisation of diffusion tensor
  images using non-{E}uclidean statistics.
\newblock {\em Journal of Applied Statistics}, 43(5):943--978.

\bibitem[Zhu et~al., 2009]{Zhu2009}
Zhu, H., Chen, Y., Ibrahim, J.~G., Li, Y., Hall, C., and Lin, W. (2009).
\newblock Intrinsic regression models for positive-definite matrices with
  applications to diffusion tensor imaging.
\newblock {\em Journal of the American Statistical Association},
  104(487):1203--1212.

\end{thebibliography}

\end{document}